\newtheorem{theorem}{Theorem}
\newtheorem{definition}{Definition}
\newtheorem{lemma}{Lemma}
\newtheorem{sublemma}{Sublemma}
\newtheorem{fact}{Fact}
\DeclareFixedFont{\ttb}{T1}{txtt}{bx}{n}{10}
\DeclareFixedFont{\ttm}{T1}{txtt}{m}{n}{10}
\newcolumntype{2}{D{.}{}{2.0}}
\definecolor{deepblue}{rgb}{0,0,0.5} 
\begin{document}
\title{Majority-Agreed Key Distribution using Absolutely Maximally Entangled Stabilizer States}

\author{Sowrabh Sudevan}
\email{ss18ip003@iiserkol.ac.in}
\affiliation{Indian Institute of Science Education and Research, Kolkata}

\author{Ramij Rahaman}
\email{ramijrahaman@isical.ac.in}
\affiliation{Physics and Applied Mathematics Unit, Indian Statistical Institute, 203 B.T. Road, Kolkata 700108, India}

\author{Sourin Das}
\email{sourin@iiserkol.ac.in}
\affiliation{Indian Institute of Science Education and Research, Kolkata}

\begin{abstract}

In [Phys. Rev. A 77, 060304(R),(2008)], Facchi et al. introduced absolutely maximally entangled (AME) states and also suggested ``majority-agreed key distribution"(MAKD) as a possible application for such states. In MAKD, the qubits of an AME state are distributed one each to many spatially separated parties. AME property makes it \textit{necessary} that quantum key distribution(QKD) between any two parties can only be performed with the cooperation of a majority of parties. Our contributions to MAKD are, $(1)$ We recognize that stabilizer structure of the shared state is a useful addition to MAKD and prove that the cooperation of any majority of parties(including the two communicants) is \textit{necessary} and \textit{sufficient}. Considering the rarity of \textit{qubit} AME states, we extended this result to the \textit{qudit} case. $(2)$ We generalize to shared graph states that are not necessarily AME. We show that the stabilizer structure of graph states allows for QKD between any inseparable bipartition of qubits. Inseparability in graph states is visually apparent in the connectivity of its underlying mathematical graph. We exploit this connectivity to demonstrate conference keys and multiple independent keys per shared state. 
Recent experimental and theoretical progress in graph state preparation and self-testing make these protocols feasible in the near future.
\end{abstract}

\maketitle

\section{Introduction}
Methods of private communication (cryptography)\cite{Loepp_Wootters_2006,Arora_Barak_2009,Nielsen_Chuang_2010} have been explored since antiquity. These methods and the study of their security were formalized with the discretization of messages into strings of numbers\cite{6769090}. This allowed the task of encryption to be understood in terms of a function or algorithm that takes in two number strings(the encoded plain text and key) and returns the encrypted message. This allows the encryption algorithm to be publicly known, while resting the privacy of the message solely on the key. Secure communication is now a problem of secure key distribution among intended parties. While classical key distribution relies on exploiting the computational hardness of inverting certain trap-door functions\cite{1055638,2021LNP...988.....W}, quantum key distribution\cite{SCARANI201427,Pirandola:20} utilizes physical phenomena\cite{RevModPhys.94.025008,RevModPhys.74.145,Masanes_2009,Colbeck_2011} such as quantum measurements, \cite{portmann2014cryptographicsecurityquantumkey,PhysRevLett.68.3121,PhysRevLett.102.020504}, the non-commutativity of observables and the monogamy of entanglement\cite{5388928,PhysRevLett.113.140501,PhysRevLett.92.217903,cerveromartín2023deviceindependentsecurityquantum},to detect potential eavesdroppers in the public channel\cite{PhysRevA.59.4238,PhysRevA.72.032301,Tomamichel2017largelyself}. 

All known QKD protocols\cite{sabani2022quantum} can be placed in two groups, the prepare and measure ones\cite{10.1145/382780.382781,PhysRevLett.85.441,BENNETT20147,Hayashi_2012} and the entanglement-based ones\cite{2021LNP...988.....W}. In this letter, we focus on the latter. The first cryptographic protocol that explicitly uses entanglement as a resource is the eponymous Ekert-91 (E-91) protocol\cite{PhysRevLett.67.661}. This protocol requires the distribution of the two qubits of many copies of a Bell pair between two communicants. Some fraction of this shared resource is used to detect potential eavesdroppers\cite{vsupic2020self}, and once eavesdroppers are ruled out, the remaining fraction is used to generate a perfectly random bit string known only to the two intended communicants. This bit string serves as the key to be used with OTP\cite{PhysRevLett.98.140502}, AES or other encryption schemes. Since the E-$91$ protocol uses bipartite entanglement, it is natural to explore what practical features can be added to key distribution if one uses multipartite entanglement\cite{RevModPhys.81.865,PhysRevLett.110.030501}. 

Multipartite entanglement, even among pure states, comes in many forms, and the various distinct and sometimes conflicting measures used to quantify this, sometimes capture independent physical properties of the state\cite{bengtsson2016briefintroductionmultipartiteentanglement,PhysRevA.69.052330,PhysRevA.69.062311,PhysRevA.106.032604,PhysRevA.104.032601,PhysRevA.107.022428,Borras_2007,PhysRevA.100.062329,Eltschka_2014,PhysRevA.86.052335,10143323,PhysRevA.108.022426,PhysRevA.69.052330}. Absolutely maximally entangled states\cite{helwig2013absolutelymaximallyentangledstates,Helwig_2012,Enríquez_2016} are a straightforward generalization of maximally bipartite entangled states. In such states, the entanglement entropy across every bipartition of the system is maximum. A consequence of this maximum entanglement is that,
\begin{eqnarray}
    \bra{\psi}\mathcal{\hat{O}}(r) \ket{\psi} = 0,\;\;\forall r\in [1,\lfloor n/2 \rfloor],
\end{eqnarray}
where, $\ket{\psi}$ is an $n$-qubit AME state and $\mathcal{\hat{O}}(r)$ is an observable with support on $r$ qubits.

AME states have been used in the study of quantum error correction, and more pertinently, for us, in Facchi et al.'s majority-agreed key distribution (MAKD). 

\section{Majority-agreed key distribution}
MAKD requires each qubit of an AME state of $n$-qubits be distributed among $n$ parties. Now, QKD between any two parties can be accomplished using the cooperation of a majority of the parties. They illustrated MAKD with the following five qubit AME state,
\begin{equation} \label{Facchi_state}
   \ket{\psi_5} = \frac{1}{\sqrt{2^5}}\sum_{k\in \mathbb{Z}_2^5} e^{i\phi_k} \ket{k}, 
\end{equation}
where $(\phi_k) = (0,0,0,0,0,\pi,\pi,0,0,,\pi,\pi,0,0,0,0,0,0,0, \\\pi,\pi,0,\pi,0,\pi,\pi,0,\pi,0,\pi,\pi,0,0)$. The AME property of $\ket{\psi_5}$ ensures that,
\begin{equation} \label{2-uniform}
    \bra{\psi_5} \hat{O}(r) \ket{\psi_5} = 0 \;\;, \forall r \in \{1,2\}. 
\end{equation}
Equation \ref{2-uniform} implies that the measurement outcomes of any one party is identical to a fair coin toss and also that any choice of measurements on any two parties is completely uncorrelated. Therefore one should consider correlations between observables on more than two parties. In fact, they note the following perfect correlation between Pauli measurements on all the parties,
\begin{equation}\label{Facchi_correlation}
    \bra{\psi_5} Z_1 Z_2 Y_3 Y_4 Z_5 \ket{\psi_5} = 1,
\end{equation}
and claim that key distribution between any two parties requires the blind cooperation of the other three. For example, if party-$1$ announces the intent to set up a secure channel with party-$2$, the other three parties can choose to make the measurements $Y_3,Y_4$ and $Z_5$ respectively and announce their outcomes. This reduces correlation \ref{Facchi_correlation} to the binary equation,
\begin{equation} \label{facchi_bin}
     Z_1^{\mathcal{M}} \cdot Z_2^{\mathcal{M}} = Y_3^{\mathcal{M}}\cdot Y_4^{\mathcal{M}}\cdot Z_5^{\mathcal{M}}, 
 \end{equation}
 where, $\sigma_i^{\mathcal{M}}$ represents the result of measuring the Pauli ($\sigma$) by the $i$'th party. Note that the terms in the right hand side of equation \ref{facchi_bin} is public knowledge. Using this value, parties $1$ and $2$ can know each other's measurement outcomes and thereby complete the key exchange.
 
Although QKD between two parties can be accomplished using this correlation, we note that \cite{PhysRevA.77.060304} needs a clarification: the cooperation of all three other parties is not necessary. The state in equation \ref{Facchi_state} has a perfect correlation/anti-correlation involving Pauli measurements for any choice of three qubits as seen in table \ref{Facchi_table}.
 \begin{table}[]
    \centering
    \begin{tabular}{ |c|c|c|c|}
 \hline \hline
  $(i,j,k)$ & Stabilizers & $(i,j,k)$ & Stabilizers \\
 \hline
 $(1,2,3)$ & $Z_1 X_2 X_3$ & $(1,4,5)$ & $Z_1 X_4 X_5$ \\
 $(1,2,4)$ &  $X_1 Z_2 Z_4$ & $(2,3,4)$ & $-Y_2 Y_3 Y_4$ \\
 $(1,2,5)$ & $Y_1 Y_2 Z_5$ & $(2,3,5)$ & $Z_2 Z_3 X_5$ \\
 $(1,3,4)$ & $Y_1 Z_3 Y_4$ & $(2,4,5)$ & $-X_2 Y_4 Y_5$ \\
 $(1,3,5)$ &  $X_1 Y_3 Y_5$ & $(3,4,5)$ & $X_3 Z_4 Z_5$ \\
 
\hline \hline
    \end{tabular}
    \caption{Perfect correlations/stabilizer elements of $\ket{\psi_5}$ for all qubit triples (i,j,k).}
    \label{Facchi_table}
\end{table}
Therefore, any two-party QKD can be accomplished using the cooperation of any one other party. 

The perfect correlations in Table \ref{Facchi_table} are a consequence of the fact that the state in Equation \ref{Facchi_state} happens to be a stabilizer state\cite{gottesman1997stabilizer}. $\ket{\psi}$ is an $n$ qubit stabilizer state if there exist $n$ independent tensor products of Pauli matrices $\{S_i\}$, such that 
\begin{eqnarray*}
 S_i\ket{\psi} = \ket{\psi}\;\; : \;\; S_i S_j = S_j S_i \And S_i^2 = I \\ 
  \implies \bra{\psi} \prod_{i=1}^n S_i^{r_i} \ket{\psi} = +1, \;\; : \;\; r_i \in \{0,1\}.
\end{eqnarray*}
$\prod_{i=1}^n S_i^{r_i}$ are elements of the stabilizer group corresponding to $\ket{\psi}$, $\mathcal{S}(\ket{\psi})$. Therefore, $\ket{\psi}$ has $2^n$ independent perfect correlations involving Pauli measurements on qubits. Any set of measurements that do not correspond to a stabilizer element is completely uncorrelated.

AME structure makes it necessary that more than $\lfloor n/2\rfloor$ parties cooperate since the observations of a smaller number of parties is completely independent of each other. Imposing stabilizer structure, ensures that all non-zero correlations are perfect and potentially usable for key distribution. It may be possible that one could use shared non-stabilizer AME states. Then, the non-zero correlations are not necessarily perfect, nevertheless, it may still be possible to use some combination of partial correlations to perform key exchange. We will not consider this more general problem in this letter. In summary, while not all AME states are stabilizer states, those that are, are easier to study. Hence, we explicitly impose stabilizer structure on our shared AME states and have arrived at the following theorem.

\begin{theorem} \label{AME theorem}
    In stabilizer states that are also AME, for any choice of $\lfloor n/2 \rfloor + 1$ qubits, one can find a stabilizer element with support only on those qubits. In fact, when $n$ is odd/even, we have exactly one/three such stabilizer elements. 
\end{theorem}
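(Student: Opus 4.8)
The plan is to convert the statement into a count of a subgroup of the stabilizer group, and then evaluate that count using purity together with the AME property. Fix a set $A$ of $\lfloor n/2\rfloor+1$ qubits and let $\mathcal{S}_A\le\mathcal{S}(\ket{\psi})$ be the subgroup of stabilizer elements whose support lies inside $A$; the theorem asks exactly for $|\mathcal{S}_A|$. Each $P\in\mathcal{S}_A$ has the form $\pm Q_A\otimes I_{A^c}$, so $\operatorname{Tr}_{A^c}(P)=\pm 2^{|A^c|}Q_A$, whereas $\operatorname{Tr}_{A^c}(P)=0$ for every $P\in\mathcal{S}(\ket{\psi})\setminus\mathcal{S}_A$ (such a $P$ carries a nontrivial Pauli on some qubit of $A^c$). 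Tracing $A^c$ out of $\ket{\psi}\bra{\psi}=2^{-n}\sum_{P\in\mathcal{S}(\ket{\psi})}P$ therefore leaves $\rho_A=2^{-|A|}\sum_{P\in\mathcal{S}_A}P|_A$, a uniform sum over a group of commuting signed Paulis on $A$ not containing $-I$; this is $|\mathcal{S}_A|\,2^{-|A|}$ times the projector onto their common $+1$ eigenspace, so $\operatorname{rank}\rho_A=2^{|A|}/|\mathcal{S}_A|$ (equivalently, the entanglement entropy is $S(\rho_A)=|A|-\log_2|\mathcal{S}_A|$), and likewise $\operatorname{rank}\rho_{A^c}=2^{|A^c|}/|\mathcal{S}_{A^c}|$.

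Next I would feed in the two hypotheses. Because $\ket{\psi}$ is pure, $\rho_A$ and $\rho_{A^c}$ are isospectral, hence $|A|-\log_2|\mathcal{S}_A|=|A^c|-\log_2|\mathcal{S}_{A^c}|$. Since $|A^c|=n-\lfloor n/2\rfloor-1=\lceil n/2\rceil-1\le\lfloor n/2\rfloor$, the AME property — in the form already noted in the text, that no nontrivial stabilizer element has weight at most $\lfloor n/2\rfloor$ — gives $\mathcal{S}_{A^c}=\{I\}$, so $\log_2|\mathcal{S}_A|=|A|-|A^c|=2\lfloor n/2\rfloor+2-n$. This equals $1$ for odd $n$ and $2$ for even $n$, so $|\mathcal{S}_A|=2$ (one nontrivial element) or $|\mathcal{S}_A|=4$ (three nontrivial elements) respectively; and since the minimum weight of a nontrivial stabilizer element is $\lfloor n/2\rfloor+1=|A|$, each such element in fact acts nontrivially on all of $A$. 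As $A$ was an arbitrary $(\lfloor n/2\rfloor+1)$-subset, the theorem follows.

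As a remark I would note that the same count drops out without density matrices: represent $\mathcal{S}(\ket{\psi})$ by its Lagrangian subspace $L\subseteq\mathbb{F}_2^{2n}$ for the symplectic form and use $L=L^\perp$ with $V_A^\perp=V_{A^c}$ (where $V_A$ is the coordinate subspace of strings supported on $A$) to get $\dim(L\cap V_A)-\dim(L\cap V_{A^c})=2|A|-n$; AME again kills $\dim(L\cap V_{A^c})$, and $|\mathcal{S}_A|=2^{\dim(L\cap V_A)}$. The step I expect to need the most care is not the algebra but the surrounding bookkeeping: verifying that $|A^c|\le\lfloor n/2\rfloor$ for both parities so that AME genuinely constrains the complement, invoking the AME hypothesis in its equivalent ``minimum weight exceeds $\lfloor n/2\rfloor$'' form rather than the entropy form, and evaluating $2\lfloor n/2\rfloor+2-n$ correctly in the odd and even cases. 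Everything beyond that is the standard reduced-state/subgroup dictionary for stabilizer states.
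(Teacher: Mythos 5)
Your proof is correct and follows essentially the same route as the paper's: expand $\rho_A$ as a uniform sum over the subgroup $\mathcal{S}_A$, read off its rank/entropy, use purity to equate the entropies of $\rho_A$ and $\rho_{A^c}$, and let the AME property fix the complement's side, yielding $\log_2|\mathcal{S}_A| = 2\lfloor n/2\rfloor + 2 - n$. The only cosmetic difference is that you invoke AME through the equivalent ``no nontrivial stabilizer of weight $\le \lfloor n/2\rfloor$'' statement to get $\mathcal{S}_{A^c}=\{I\}$, whereas the paper uses the maximally mixed reduction directly, and your added observation that the elements have support exactly $A$ is a nice (if inessential) refinement.
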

See appendix \ref{proof theorem 1} for proof. Therefore, with access to $n$-qubit AME stabilizer states shared between $n$ parties, QKD between any two parties can be performed using the permission of any set of $\lfloor n/2 \rfloor -1$ other parties. The number of cooperators here is both necessary and sufficient. 

Since MAKD involves many measurements and public announcements of the outcomes, one might worry if these public announcements can render the key calculable by the public or by one of the other parties sharing the state. We show in appendix \ref{substring AME} how neither of these are possible.

However, among multi-qubit states, AME states exist only when $n=2,3,5,6$ \cite{PhysRevA.77.060304,PhysRevLett.118.200502}. Many AME states have been discovered for the more general case of multi-qudit states. A table listing all known AME states can be found in \cite{AME_table,Huber_2018}. The rarity of absolutely maximal entanglement among multi-qubit states motivates us to  consider if a version of Theorem \ref{AME theorem} can be true for multi-qudit states. In fact, this is true.
\begin{theorem}\label{qudit ame theorem}
    $n$-qudit stabilizer AME states possess at least one stabilizer with support exactly on any choice of $\lfloor n/2 \rfloor + 1$ qudits.
\end{theorem}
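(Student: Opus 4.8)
The plan is to run the proof of Theorem~\ref{AME theorem} verbatim, replacing the qubit Pauli group by the generalized Heisenberg--Weyl group on $n$ qudits and replacing counting over $\mathbb{F}_2$ by counting over $\mathbb{Z}_d$. First I would fix the setup: for prime $d$, an $n$-qudit stabilizer state $\ket{\psi}$ is the joint $+1$ eigenstate of an abelian subgroup $\mathcal{S}$ of order $d^n$ of the qudit Pauli group, and $\mathcal{S}$, modulo phases, corresponds to a Lagrangian (maximal isotropic) subspace $L\subseteq\mathbb{Z}_d^{2n}$ for the standard symplectic form. For a subset $A$ of the qudits, let $\mathcal{S}_A$ denote the subgroup of stabilizer elements whose support lies inside $A$; this corresponds to $L\cap V_A$, where $V_A$ is the coordinate subspace carrying the $X$- and $Z$-exponents on the qudits of $A$.

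The single structural input I need is the reduced-state identity $\rho_A\propto\sum_{g\in\mathcal{S}_A}g|_A$, which yields $S(\rho_A)=(|A|-\log_d|\mathcal{S}_A|)\log d$ (equivalently, the symplectic fact $V_A^{\perp}=V_{A^c}$ together with $L^{\perp}=L$ gives a purely linear-algebraic version of the same statement). Purity of $\ket{\psi}$ then forces the balance relation
\begin{equation}
\log_d|\mathcal{S}_A| - \log_d|\mathcal{S}_{A^c}| = 2|A| - n .
\end{equation}
The AME hypothesis says exactly that $\rho_A$ is maximally mixed, i.e.\ $\mathcal{S}_A=\{I\}$, for every $A$ with $|A|\le\lfloor n/2\rfloor$.

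Now fix $B$ with $|B|=\lfloor n/2\rfloor+1$. Its complement satisfies $|B^c|=\lceil n/2\rceil-1\le\lfloor n/2\rfloor$, so $\mathcal{S}_{B^c}=\{I\}$, and the balance relation collapses to $\log_d|\mathcal{S}_B|=2\lfloor n/2\rfloor+2-n$, which equals $1$ when $n$ is odd and $2$ when $n$ is even. In either case $\mathcal{S}_B$ contains a nontrivial element $g$. To finish I would invoke AME once more for minimality: if the support of $g$ were a proper subset $C\subsetneq B$, then $|C|\le\lfloor n/2\rfloor$, hence $g\in\mathcal{S}_C=\{I\}$, a contradiction; so $g$ is supported exactly on $B$, proving the theorem. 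The same count in fact gives $|\mathcal{S}_B|=d$ for odd $n$ and $|\mathcal{S}_B|=d^2$ for even $n$ when $d$ is prime, the qudit analogue of the refined count in Theorem~\ref{AME theorem}.

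I expect the only genuine obstacle to be bookkeeping for composite $d$: then the qudit Pauli group modulo phases is a $\mathbb{Z}_d$-module rather than a vector space, $|\mathcal{S}|$ need not equal $d^n$, and the clean Lagrangian/symplectic dictionary degrades. The safe route is to prove the theorem for prime $d$ first, extend to prime powers by working over the appropriate field (or by a CSS-type factorization across the prime factors of $d$ when $d$ is squarefree), and phrase the fully general statement using only the monotonicity $\mathcal{S}_C\subseteq\mathcal{S}_B$ for $C\subseteq B$ together with the entropy balance, both of which survive. I would also sanity-check the small cases $n=2,3$, where $\lfloor n/2\rfloor+1$ can equal $n$ or leave a singleton complement and the complement argument is nearly vacuous, to confirm the edge behaviour is consistent.
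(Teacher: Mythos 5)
Your proposal is correct and follows essentially the same route as the paper's Appendix~\ref{qudit ame proof}: write the reduced state of a subset as a sum over the local stabilizer subgroup, use purity/entropy symmetry together with the AME condition (which trivializes the subgroup on the complement of size $\le \lfloor n/2\rfloor$) to force the subgroup on the chosen $\lfloor n/2\rfloor+1$ qudits to be nontrivial. Your explicit minimality step showing the support is \emph{exactly} the chosen set, the refined counts $d$ versus $d^2$, and the caveats about composite $d$ are careful additions that the paper leaves implicit, but the core argument is the same.
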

See proof in appendix \ref{qudit ame proof}. Theorem \ref{qudit ame theorem} provides yet another motivation to discover more stabilizer AME states \cite{Raissi:22,PhysRevA.106.062424,PhysRevResearch.2.033411} because of its potential application in MAKD.

\section{Self-testing}
Note that we have only discussed MAKD after the state has been faithfully shared between the spatially separated parties. Distributing the qubits of some state to many spatially separated locations via quantum channels exposes the qubits to environmental decoherence\cite{PhysRevResearch.5.043260,Kofman_2011} and intentional interactions by eavesdroppers. Let us see how this is addressed in E-$91$\cite{PhysRevLett.67.661} (the first entanglement based QKD protocol). Here the Bell state,
\begin{eqnarray}
 \ket{b} = \frac{1}{\sqrt{2}}(\ket{00} + \ket{11})   
\end{eqnarray}
is distributed between parties $1$ and $2$. Consider the CHSH inequality\cite{PhysRevLett.23.880,PhysRevLett.47.460},
\begin{eqnarray}\label{bell inequality}
    I_b := \langle A_1 B_2 \rangle + \langle A_1^\prime B_2\rangle + \langle A_1 B_2^\prime \rangle - \langle A_1^\prime B_2^\prime \rangle. 
\end{eqnarray}
Where $A,A^\prime$ and $B,B^\prime$ are local dichotomic observables for parties $1$ and $2$ respectively. Local realistic assignment of dichotomic values ($\{\pm 1\}$) for the observables in \ref{bell inequality} shows that $|I_b|\leq 2$. Now let,
\begin{eqnarray}
    A, A^\prime \equiv \frac{X\pm Z}{\sqrt{2}} \;\;\And \;\; B,B^\prime \equiv X,Z.
\end{eqnarray}
We see that the quantum expectation values of $I_b$ with the above choice of local observables is $2\sqrt{2}$. In fact, this violation can be used for self-testing, i.e., to confirm that the qubit distribution was performed faithfully. Any reduction in the degree of violation of this Bell inequality quantifies the extend to which the state has decohered or an eavesdropper has attempted to extracted information. One method to implement E-$91$ is to have party $1$ and party $2$ choose uniformly at random from the following sets of measurements,
\begin{eqnarray}
    \text{Party }1\rightarrow\{A,A^\prime\} \; \And\; \text{Party }2\rightarrow\{A,A^\prime,B,B^\prime\}.
\end{eqnarray}
After the measurements are made, the choice of measurements is announced publicly by $1$ and $2$. The measurements performed by $1$ and $2$ match one of the terms in $I_C$ with $50\%$ probability. Therefore, this fraction of the Bell pairs is used for self-testing. Once the identity of the state is verified, the fraction of Bell pairs where the measurements by $1$ and $2$ matched ($25\%$) can be used for QKD since,
\begin{eqnarray}
    \bra{b} A_1 A_2 \ket{b} = \bra{b} A_1^\prime A_2^\prime\ket{b} = 1.
\end{eqnarray}
Many self-testing inequalities are known for multiqubit states that involve local observables\cite{zhao2022constructing,panwar2023elegant,makuta2021self,kalev2019validating,_upi__2016}. In this letter we borrow the scalable Bell inequalities by \cite{PhysRevLett.124.020402} for graph states. Graph states\cite{hein2006entanglement,Epping_2016,9252479,PhysRevLett.129.090501,cui2015generalized,Adcock:2019vri,Lin2024graphiqquantum,Thomas_2024} are stabilizer states where the stabilizers can be constructed with reference to a simple, undirected mathematical graph. For graph with adjacency matrix $G$, the stabilizers of the corresponding graph state $\ket{G}$ are given by,
\begin{eqnarray}
    S_i = X_i \prod_{j\in \mathcal{N}(i)} Z_j,
\end{eqnarray}
where $\mathcal{N}(i)$ is the set of neighbors of vertex $i$ in graph $G$. The inequalities in \cite{PhysRevLett.124.020402} are useful for us because every stabilizer state is equivalent to some graph state up to local unitaries\cite{dahlberg2020transform}. Therefore for any AME, stabilizer state we can consider an LU equivalent graph state for which the Bell tests in \cite{PhysRevLett.124.020402,yang2022testing} can used up to a local basis change. Going back to $\ket{\psi_5}$ from equation \ref{Facchi_state}, we can show that,  
\begin{eqnarray}\label{facchi to c}
    X_1 F_1 G_2 F_3 H_3 Z_4 H_4 \ket{\psi_5} = \ket{\text{c}}
\end{eqnarray}
\begin{eqnarray*}
  \text{where,}\;\; F = \frac{X+Y}{\sqrt{2}}\;;\; G = \frac{Y+Z}{\sqrt{2}} \;;\; H = \frac{Z+X}{\sqrt{2}},  
\end{eqnarray*}
and $\ket{c}$ is the cluster state with periodic boundary conditions or ring graph state on $5$-qubits, with the set of stabilizer generators,
\begin{eqnarray}
    S_i = Z_{i-1} X_i Z_{i+1}.
\end{eqnarray}
$\ket{c}$ can be self-tested using the Bell operator,
\begin{eqnarray}\label{cs inequality}
    I_c = 2\sqrt{2}\langle X_1 A_2^\prime A_5^\prime\rangle + \sqrt{2} \langle Z_1 A_2 A_3^\prime \rangle +\\ \sqrt{2} \langle Z_1 A_4^\prime A_5 \rangle + \langle A_2^\prime A_3 A_4^\prime \rangle+\langle A_3^\prime A_4 A_5^\prime \rangle. 
\end{eqnarray}
Local-realistic(classical) substitution of dichotomic values for each observable gives $|I_c|\leq 6$, whereas the quantum expectation value is $\bra{c.s} I_c \ket{c.s} = 2+4\sqrt{2}$. 
Using $I_c$ and equation.\ref{facchi to c} we can construct the corresponding Bell inequality for $\ket{\psi_5}$ by a simple basis change $I_{\psi_5} = U I_c U^{\dagger}$, where $U = F_1 X_1 G_2 H_3 H_4 Z_4$. Now, we must randomly partition the shared states to a set for self-testing and a set for MAKD. Due to the larger number of observables in \ref{cs inequality}, each party making a random choice between some set of observables will result in very few of the states being used for self-testing or for MAKD. This prompts us to consider partitioning the shared states in a much more obvious way. Assume that $f\in[0,1]$ is the fraction of the shared states that we wish to use for self-testing. Consider an approximation of $f$, i.e., $f\approx k/d$. Now let each party toss a fair $d$-sided die and announce their outcomes $t_i$. With all of these outcomes public, all parties can compute, 
\begin{eqnarray}
 D = t_1 \oplus_d t_2 \oplus_d\dots \oplus t_5,  
\end{eqnarray}
where, $\oplus_d$ is addition modulo $d$. Note that $D\in[0,d-1]$. Now let we use the states for which $D\in [0,k-1]$ for self-testing and those with $D\in [k,d-1]$ for MAKD. This corresponds to a fair and random $k/d$ vs $1-k/d$ partitioning of the shared states. An appropriate choice of $f$ may depend on the choice of self-testing inequalities and the total number of shared qubits. Note that the partitioning between self-test states and MAKD states is performed after the qubits are already out of the quantum channels and in the possession of each party. Therefore, the public communications to perform this partitioning cannot be useful for a potential eavesdropper. Note that we have assumed that we have access to authenticated classical channels, i.e., classical channels where the identity of each transmitter is known with certainty.
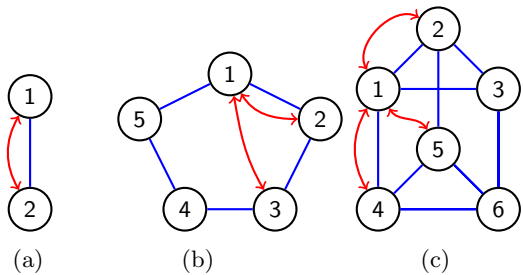
\begin{figure}[ht]
    \centering
    \begin{subfigure}[t]{0.12\textwidth}
        \begin{tikzpicture}[thick,main node/.style=     {circle,draw,font=\sffamily\small},scale=0.5, every node/.style={scale=1}]
            \node[main node](1) at (0,3) {1};
            \node[main node](2) at  (0,0) {2};
            \draw[thick,blue] (1) -- (2);
            \draw [<->,red] (1) to [out = 240,in = 120] (2);
        \end{tikzpicture}
        \subcaption{}
    \end{subfigure}
    \begin{subfigure}[t]{0.12\textwidth}
        \begin{tikzpicture}[thick,main node/.style={circle,draw,font=\sffamily\small},scale=0.4, every node/.style={scale=1}]
            \node[main node](4) at (0,0) {4};
            \node[main node](3) at  (3,0) {3};
            \node[main node](2) at  (4.5,3) {2};
            \node[main node](1) at  (1.5,4.5) {1};
            \node[main node](5) at  (-1.5,3) {5};
            \draw[thick,blue] (1) -- (2) -- (3) -- (4) -- (5) -- (1);
            \draw [<->,red] (1) to [out = -60,in = 180] (2);
            \draw [<->,red] (1) to [out = -80 ,in=120] (3);
        \end{tikzpicture}
        \subcaption{}
        \end{subfigure} 
    \begin{subfigure}[t]{0.22\textwidth}
        \begin{tikzpicture}[thick,main node/.style={circle,draw,font=\sffamily\small},scale=0.8, every node/.style={scale=1}]
            \node[main node](1) at (0,2) {1};
            \node[main node](2) at  (1,3) {2};
            \node[main node](3) at  (2,2) {3};
            \node[main node](4) at  (0,0) {4};
            \node[main node](5) at  (1,1) {5};
            \node[main node](6) at  (2,0) {6};
            \draw[thick,blue] (1)--(2)--(3)--(1)--(4)--(5)--(6)--(4)--(6)--(3)--(6)--(5)--(2);
            \draw [<->,red] (1) to [out = 120 ,in = 150] (2);    
            \draw [<->,red] (1) to [out = 240 ,in = 120] (4);
            \draw [<->,red] (1) to [out = -60,in = 120] (5);
        \end{tikzpicture}
        \subcaption{}
    \end{subfigure}
    \caption{The blue lines show the underlying mathematical graph for the graph state and the red double headed arrows point to distinct pairs of communicants. Next to each intent we give a stabilizer element that can be used.(a) $1\leftrightarrow 2 : X_1 Z_2$. (b) $1 \leftrightarrow 2,3 : Z_1 X_2 Z_3$. (c) $1 \leftrightarrow 2,4 : Z_4 X_1 Z_2 Z_3$ and $1\leftrightarrow 5 : S_1 S_5 = X_1 Z_3 X_5 Z_6$.}
    \label{2,3-uniform}
\end{figure}
See figure \ref{2,3-uniform} for examples of MAKD using $n=2,5,6$ graph, AME states. $n=2$ case is trivial, in the sense of not requiring any cooperation and is identical to E-$91$ using $\ket{b}$. We use the notation $(i\leftrightarrow j)$ throughout this paper to represent QKD intent between parties $i$ and $j$. 

\section{Stabilizers and entanglement}
We have used perfect correlations in stabilizer states for QKD. AME simply places a restriction on the minimum support of every stabilizer, thereby necessitating cooperation. We now remove this very strong constraint of absolutely maximal entanglement.

Let us discuss what mathematical constraint besides stabilizer structure is useful for entanglement based QKD. In stabilizer states, all nonzero correlations are perfect. In a shared stabilizer state, to perform QKD between qubits/parties $(i,j)$, one can naively use any stabilizer element, $\sigma$, with support on $i$ and $j$ and all other qubits / parties in support of $\sigma$ are required to cooperate with measurements and public announcements of their measurement outcomes. For any choice of two qubits, one can find many stabilizer elements whose support contains $i$ and $j$. For graph states, an obvious choice is $\sigma = S_i \cdot S_j$. We describe two related scenarios below to illustrate why some choices of stabilizer elements are not suitable for QKD. 
\begin{enumerate}
    \item Consider the two states,
    \begin{eqnarray*}
        \ket{p} = \ket{00} \;\;;\;\; \ket{\phi_+} =(1/\sqrt{2})(\ket{00}+\ket{11}).     
    \end{eqnarray*}
    Note that,
    \begin{eqnarray*}
        \bra{p}Z_1 Z_2 \ket{p} = \bra{\phi_+}Z_1 Z_2 \ket{\phi_+} = +1.
    \end{eqnarray*}
    We clarify that the public is always aware of the identity of the shared state and the measurements performed by each party. Despite this knowledge, $\ket{\phi_+}$ can be used for QKD and $\ket{p}$ cannot. This difference arises from how,
\begin{eqnarray*}
    \bra{p} Z_{1\text{ or } 2} \ket{p} = +1\;\;;\;\; \bra{\phi_+} Z_{1\text{ or } 2}\ket{\phi_+} = 0. 
\end{eqnarray*}
That is, the ``key" generated by $\ket{00}$ with $Z_1, Z_2$ measurements by parties $1$ and $2$ is trivially known to the public, since it is a constant value. Whereas, $Z_1,Z_2$ measurements on $\ket{\phi_+}$ result in perfectly random outcomes only known to the two parties $1$ and $2$.
\item Consider the graph state $\ket{L_4}$ in figure \ref{l4}.
 \begin{figure}[h!]
    \centering
    \begin{tikzpicture}[thick,main node/.style={circle,draw,font=\sffamily\small},scale=1, every node/.style={scale=1}]
        \node[main node](1) at (0,1) {1};
        \node[main node](2) at  (1,1) {2};
        \node[main node](3) at  (2,1) {3};
        \node[main node](4) at  (3,1) {4};
        \draw[thick,blue] (1) -- (2) -- (3) -- (4);
    \end{tikzpicture}
          \caption{$\ket{L_4}$}
          \label{l4}
\end{figure}
To perform $1\leftrightarrow 4$, use of $\sigma = S_1 \cdot S_4 = X_1 Z_2 Z_3 X_4$ is not secure. When the measurement outcomes $Z_2$ and $Z_3$ are made public, a public party can consider the stabilizer elements $S_1 = X_1 Z_2$ and $S_4 = Z_3 X_4$ to compute the key. 
\end{enumerate}
After considering these two scenarios, we can state the following necessary condition regarding stabilizer choice for QKD,

\begin{fact}\label{substring condition}
    Secure QKD between parties $i$ and $j$ of a shared stabilizer state requires the use of a stabilizer element with support on $i$ and $j$ such that no substring of the stabilizer with support on $i$ or $j$ is a stabilizer element.
\end{fact}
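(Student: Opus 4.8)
The statement is a necessary condition, so the plan is to prove the contrapositive in a sharp form: the element $\sigma$ that the $(i\leftrightarrow j)$ protocol runs on must have $i,j\in\mathrm{supp}(\sigma)$ for the two outcomes to be correlated at all (this is immediate), and I will show that if moreover $\sigma$ admits a substring $\tau=\sigma|_A$ (the Pauli obtained by restricting $\sigma$ to a subset $A$ of its support) with $\tau\in\mathcal{S}(\ket{\psi})$ and $\mathrm{supp}(\tau)$ meeting $\{i,j\}$ in exactly one point, then the public learns the key with certainty. First I would make the protocol and adversary model explicit, as in the discussion above: the public knows $\ket{\psi}$, the designated stabilizer $\sigma$, and the pattern of announced measurements; each party $k\in\mathrm{supp}(\sigma)$ measures the single-qubit Pauli $\sigma|_k$, obtaining $m_k\in\{\pm1\}$; every cooperator $k\in\mathrm{supp}(\sigma)\setminus\{i,j\}$ announces $m_k$; and the shared secret bit is $m_i$, which party $j$ recovers from $m_j$ together with the public relation $\prod_{k\in\mathrm{supp}(\sigma)}m_k=+1$ that holds because $\sigma\ket{\psi}=\ket{\psi}$. ``Secure'' will mean that $m_i$ is uniformly distributed conditioned on the entire public transcript.

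Second I would run the attack. Suppose $\tau=\sigma|_A$ with $i\in A$, $j\notin A$, and $\tau\in\mathcal{S}(\ket{\psi})$ (the case $j\in A$, $i\notin A$ is identical after swapping the two communicants). Since $A\subseteq\mathrm{supp}(\sigma)$ and $j\notin A$ we get $A\setminus\{i\}\subseteq\mathrm{supp}(\sigma)\setminus\{i,j\}$, so every outcome $m_k$ with $k\in A\setminus\{i\}$ is publicly announced as part of the protocol itself. The single-qubit Paulis $\{\sigma|_k\}_{k\in A}$ act on distinct qubits, hence commute and are measured jointly; the outcome of their product observable $\bigotimes_{k\in A}\sigma|_k=\tau$ therefore equals $\prod_{k\in A}m_k$, and since $\tau\ket{\psi}=\ket{\psi}$ this product is $+1$ with probability one. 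Hence $m_i=\prod_{k\in A\setminus\{i\}}m_k$, a quantity the eavesdropper computes from the announcements, so $m_i$ has zero conditional entropy given the transcript and the protocol is not secure. I would point out that this single argument explains both worked examples of the preceding section: scenario 2 is $\sigma=S_1S_4$, $\tau=S_1=X_1Z_2$, $A=\{1,2\}$; scenario 1 is $\sigma=Z_1Z_2$, $\tau=Z_1$, $A=\{1\}$, where the empty product gives $m_1=+1$ deterministically, matching $\bra{p}Z_1\ket{p}=+1$.

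Finally, a word on the phrase ``support on $i$ or $j$'': a substring $\tau=\sigma|_A$ with \emph{both} $i,j\in A$ is not an obstruction to security, since the complementary Pauli $\rho=\sigma\tau\in\mathcal{S}(\ket{\psi})$ satisfies $\mathrm{supp}(\rho)=\mathrm{supp}(\sigma)\setminus A\subseteq\mathrm{supp}(\sigma)\setminus\{i,j\}$, so such a $\tau$ only provides a more economical valid protocol (fewer cooperators) without revealing $m_i$; the operative content of the condition, and what the proof establishes, is therefore the one-point version above. The main obstacle I anticipate is not the algebra — the product-of-commuting-outcomes step is routine — but making the underlying security model precise enough that the statement is non-vacuous and the described attack is admissible within it; once ``secure'' is fixed as ``$m_i$ is uniform given the public transcript'' and the transcript is fixed as ``the state, the designated $\sigma$, and the cooperators' announcements,'' the argument is structural because the leak uses only data the core protocol already publishes, and I would take care to present the model in exactly that form.
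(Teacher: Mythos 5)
Your argument is correct and follows essentially the same reasoning the paper uses: the paper states this Fact only informally, motivated by the two scenarios ($\ket{p}$ with $Z_1Z_2$ and $\ket{L_4}$ with $S_1S_4$), and the reconstruction attack you formalize --- using the announced cooperator outcomes inside the substring stabilizer $\tau$ to recover $m_i$ via $\prod_{k\in A}m_k=+1$ --- is exactly the leak exhibited there and reused in Appendix B for the AME case. Your added precision (an explicit security model, and the observation that a substring containing \emph{both} $i$ and $j$ is harmless so the operative condition is the one-point version) is a faithful sharpening rather than a different route.
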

One is unable to use the stabilizer $Z_1 Z_2$ with state $\ket{p}$ because $Z_1$ and $Z_2$ which are substrings of $Z_1 Z_2$ are also stabilizers with support on $1$ and $2$. Similarly, $X_1 Z_2 Z_3 X_4$ is not suitable for $1\leftrightarrow 4$ because $X_1 Z_2$ and $Z_3 X_4$ which are substrings of $X_1 Z_2 Z_3 X_4$ are also stabilizer elements with support on $1$ and $4$.

If the stabilizer used for QKD satisfies fact \ref{substring condition}, then the public results of the measurements do not fully determine the key. It may still be possible for other parties that share the same state to determine the key using public outcomes in combination with their own local measurement outcome. Arguably, this is not as unsecure a scenario as when the public outcomes ``leak" the key to nonsharers of the graph state. We consider the distinction between these two levels of security in section \ref{multiple keys section}. In scenario $1$, $\ket{p}$ or more generally for any separable state, it is not possible to find a stabilizer for QKD which satisfies fact \ref{substring condition}. Note that in scenario $2$ we see that a stabilizer element constructed from a product of overlapping $S_i$'s for the graph state is a secure choice. A proof of this can be found within the proof for Theorem \ref{main_result}, Lemma \ref{no substring using path}.

It turns out that a lack of separability is not just necessary but also sufficient for QKD between two qubits of a shared stabilizer state. Quite simply,

\begin{theorem} \label{main_result}
    To perform QKD between any two qubits of a shared stabilizer state, it is necessary and sufficient that the state is not separable such that the two qubits belong to different bipartitions.
\end{theorem}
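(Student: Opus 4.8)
\emph{Overall plan.} The statement has two halves, and I would prove each by combining Fact~\ref{substring condition} with the behaviour of stabilizer groups under tensoring. For the sufficiency half I would first reduce to a graph state: by~\cite{dahlberg2020transform} every stabilizer state is LU-equivalent to one, single-qubit unitaries preserve both the support of a Pauli operator and any tensor-factorisation of the state, and the QKD protocol together with its security analysis transports verbatim through a local basis change. So throughout the sufficiency argument I may take $\ket{\psi}=\ket{G}$.

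\emph{Necessity.} Suppose $\ket{\psi}=\ket{\psi_A}\otimes\ket{\psi_B}$ for some bipartition of the qubit set with $i\in A$ and $j\in B$. I would invoke the standard fact that $\mathcal{S}(\ket{\psi})=\mathcal{S}(\ket{\psi_A})\times\mathcal{S}(\ket{\psi_B})$: the right-hand side is an abelian group of $2^{|A|}\cdot 2^{|B|}=2^{n}$ Pauli operators, all stabilising $\ket{\psi}$ and none equal to $-I$, hence it exhausts the stabilizer group. Consequently every $\sigma\in\mathcal{S}(\ket{\psi})$ factors as $\sigma=\sigma_A\otimes\sigma_B$ with $\sigma_A\in\mathcal{S}(\ket{\psi_A})$; whenever $i\in\mathrm{supp}(\sigma)$, the operator $\sigma_A\otimes I_B$ is a stabilizer element that is a substring of $\sigma$, whose support contains $i$ and (since $j\in B$) omits $j$. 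Thus no stabilizer element whose support contains both $i$ and $j$ can meet the hypothesis of Fact~\ref{substring condition}, and so secure QKD between $i$ and $j$ is impossible. This gives necessity of inseparability.

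\emph{Sufficiency.} Now assume no separating bipartition makes $\ket{\psi}$ a product; after the reduction above take $\ket{\psi}=\ket{G}$. A graph state is product across $A|B$ exactly when $G$ carries no edge between $A$ and $B$, so the hypothesis is equivalent to $i$ and $j$ lying in a single connected component of $G$. I would pick a shortest path $i=v_0,v_1,\dots,v_m=j$ — minimality forces it to be chordless, $v_a\sim v_b$ only for $|a-b|=1$ — and set $\sigma:=\prod_{\ell=0}^{m}S_{v_\ell}$. Reading off Pauli letters, the only $X$ at $v_\ell$ comes from $S_{v_\ell}$, so the $X$-support of $\sigma$ is exactly $\{v_0,\dots,v_m\}$; in particular $i,j\in\mathrm{supp}(\sigma)$. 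The remaining claim — that no substring of $\sigma$ supported on the $i$-side or $j$-side is a stabilizer — is Lemma~\ref{no substring using path}: any stabilizer equal to such a substring must be $R=\prod_{\ell\in L}S_{v_\ell}$ with $L\subsetneq\{0,\dots,m\}$, $0\in L$, $m\notin L$ (forced by matching $X$-supports), and comparing the Pauli letters of $R$ and $\sigma$ at $v_{p-1}$ and $v_p$, where $p$ is the least index not in $L$, while using chordlessness to evaluate the $Z$-parities, one finds a qubit on which they disagree. Having produced $\sigma$, the hypothesis of Fact~\ref{substring condition} is met: the cooperators in $\mathrm{supp}(\sigma)\setminus\{i,j\}$ measure and announce their $\sigma$-Paulis, collapsing $\sigma\ket{\psi}=\pm\ket{\psi}$ to a publicly known equation between the outcomes of $i$ and $j$, while the absence of a stabilizer substring supported on $\{i\}$ together with the announced qubits forces that shared outcome to be uniformly random conditioned on all public data.

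\emph{Main obstacle.} The one genuinely non-routine ingredient is Lemma~\ref{no substring using path}: it is a finite $\mathbb{F}_2$-linear-algebra fact about combinations of graph-state generators, but the $X$/$Z$ bookkeeping along the chordless path has to be carried through the awkward boundary cases (notably $p=1$, and the subcase $p+1\in L$) before the ``first gap'' argument closes. Everything else is either textbook stabilizer algebra or a direct appeal to Fact~\ref{substring condition}.
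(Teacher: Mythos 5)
Your proposal is correct in outline and shares the paper's overall architecture: necessity by showing that every stabilizer element of a product state factorizes and then invoking Fact~\ref{substring condition}; sufficiency by passing to a graph state, identifying inseparability with path-connectivity, and constructing a path stabilizer with no stabilizer substring on the communicants. The two halves, however, differ from the paper in their key steps, and one of them needs shoring up. In the necessity half, your counting step (``the right-hand side is an abelian group of $2^{|A|}\cdot 2^{|B|}=2^{n}$ Pauli operators'') presupposes that $\ket{\psi_A}$ and $\ket{\psi_B}$ are themselves stabilizer states, i.e.\ that their stabilizer groups have full order $2^{|A|}$ and $2^{|B|}$. That is precisely the content of the paper's Lemma~\ref{separable stabilizer}, which the paper proves by a coefficient/normalization argument; calling it ``standard'' leaves the count circular as written. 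It is true and can be supplied quickly (e.g.\ $\rho_A=2^{-|A|}\sum_{\sigma\in\mathcal{S}_A}\sigma$ for the subgroup $\mathcal{S}_A$ supported on $A$, and purity of $\rho_A$ forces $|\mathcal{S}_A|=2^{|A|}$), after which your group-theoretic factorization is a cleaner route than the paper's Lemma~\ref{necessary MPA-QKD}, which expands projectors and uses linear independence of Pauli strings.

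In the sufficiency half you take $\sigma=\prod_{\ell=0}^{m}S_{v_\ell}$ over \emph{all} vertices of a chordless shortest path, whereas the paper's Lemma~\ref{no substring using path} uses the alternate-vertex product $S_{1}S_{3}S_{5}\cdots$, whose support drops the even path vertices; so you cannot literally cite that lemma and must prove your own substring claim. Your sketch does close: matching $X$-supports forces any stabilizer substring to be $\prod_{\ell\in L}S_{v_\ell}$ with $L$ a proper subset, and at $v_{p-1}$, with $p$ the least index missing from $L$, chordlessness gives the substring the letter $Y$ where $\sigma$ carries $X$ (and, in the boundary case $p=1$, the letter $X$ where $\sigma$ carries $Y$), a contradiction. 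Your choice enlists more cooperators than the paper's, but it is exactly the full-path product $\prod_{k\in P(i,j)}S_k$ that the paper itself recommends in Section~\ref{multiple keys section} to prevent other sharers of the state from inferring the key, so nothing is lost and something is gained. Two minor points: for the reduction, invoke local \emph{Clifford} (not merely local unitary) equivalence to a graph state so that $U\sigma U^{\dagger}$ remains a Pauli string, which is what the cited results provide; and the easy direction of the connectivity claim you use (no path between $i$ and $j$ implies the state factorizes across components) is the paper's Lemma~\ref{path non separable}, proved there via the tableau block structure.
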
 

A proof of Theorem \ref{main_result} is provided in Appendix \ref{necessary and sufficient QKD}. Intuitively, one expects an entanglement-based QKD protocol to necessarily require a lack of separability. But the degree of bipartite entanglement in generic multi-qubit states, when quantified, can take a continuous range of values that can be arbitrary close to zero while still remaining non-separable. That is, a simple lack of separability may not be very useful for many applications. However, for stabilizer states, the entanglement across different bipartitions always comes in discrete ``chunks". Therefore, a lack of separability is equivalent to the presence of sufficient entanglement required for QKD in shared stabilizer states.

Theorem \ref{main_result} is the reason why we did not attempt to generalize MAKD to a protocol that uses $k$-uniform states\cite{Arnaud_2013,PhysRevA.90.022316,sudevan2022nqubit,shi2020constructions}. $k$-uniform states are $n$-qubit states whose every $k$-qubit reduction is maximally mixed. Naturally, AME states are a special case of $k$-uniformity, where $k=\lfloor n/2 \rfloor$. However, $k$-uniform states are not always useful for the type of QKD discussed in this work, since they can be separable\footnote{It can be shown that if $\ket{a}$ and $\ket{b}$ are $k$-uniform states, then $\ket{a}\otimes \ket{b}$ is also a $k$-uniform state.}.

The proof of theorem \ref{main_result} shows that any shared graph state that allows QKD between any two of its parties will be one constructed from a connected graph. Connected graphs can be defined as graphs for which there is a path between any two qubits $i$ and $j$. In fact, as the proof discusses, any local unitary equivalent graph state constructed from a shared stabilizer state that allows secure QKD between any two of its qubits or parties will correspond to a connected graph. To perform QKD between any two qubits, a stabilizer element consisting of overlapping stabilizer elements along the path is a secure choice. Since there can be multiple possible paths and choice of overlapping stabilizer elements, the stabilizer choice for QKD between some specific pair ($i,j$) is not unique. In a practical setting, a table of stabilizer elements, one for each pair of communicants, can be decided in advance. We provide an example of a connected graph state and an associated communicant pair - stabilizer table in figure \ref{graph for table}.

\begin{figure}[ht]
\centering
\begin{subfigure}[t]{0.45\textwidth}
\begin{tikzpicture}[thick,main node/.style={circle,draw,font=\sffamily\small}]
    \node[main node](1) at (0,1) {1};
    \node[main node](2) at  (1,1) {2};
    \node[main node](3) at  (1,0) {3};
    \node[main node](4) at  (2,1) {4};
    \node[main node](5) at  (2,0) {5};
    \draw[thick,blue] (1) -- (2) -- (4) -- (5);
    \draw[thick,blue] (2) -- (3);
\end{tikzpicture}    
\end{subfigure}
\begin{subfigure}[t]{0.45\textwidth}

\begin{tabular}{ c|| c | c | c | c | c |}
$i / j$ & $1$ & $2$ & $3$ & $4$ & $5$ \\
\hline\hline
$1$ & \textunderscore & $X_1 Z_2$ & $X_1 X_3$ & $X_1 X_4 Z_5$ & $X_1 X_4 Z_5$ \\ 
\hline
$2$ & $X_1 Z_2$ & \textunderscore & $Z_2 X_3$ & $Z_2 X_4 Z_5$ & $Z_2 X_4 Z_5$ \\ 
\hline
$3$ & $X_1 X_3$ & $Z_2 X_3$ & \textunderscore & $X_3 X_4 Z_5$ & $X_3 X_4 Z_5$ \\ 
\hline
$4$ & $X_1 X_4 Z_5$ & $Z_2 X_4 Z_5$ & $X_3 X_4 Z_5$ & \textunderscore & $Z_4 X_5$ \\ 
\hline
$5$ & $X_1 X_4 Z_5$ & $Z_2 X_4 Z_5$ & $X_3 X_4 Z_5$ & $Z_4 X_5$ & \textunderscore  \\
\hline
\end{tabular}

\end{subfigure}
    \caption{ Table of stabilizer correlations for all $i,j$ communicant pairs for the above graph.}
    \label{graph for table}
\end{figure}

Now we can summarize the steps to perform QKD between qubits $i$ and $j$ of an $n$-qubit graph state shared between $n$ parties,

\begin{itemize}
    \item \textbf{Step 0: } Construct a pair stabilizer table for the resource graph state. This step is performed once; all subsequent QKD's continue to use the same table.
    \item \textbf{Step 1: }Party $i$ publicly announces its intention to share a key with party $j$.
    \item \textbf{Step 2: } All cooperators according to the pair stabilizer table make the corresponding Pauli measurements and publicly announce their measurement outcomes.
    \item \textbf{Step 3: } If all necessary permission bits are public, $i,j$ can use their corresponding stabilizer correlation and their own measurement outcomes to compute the key.
\end{itemize}

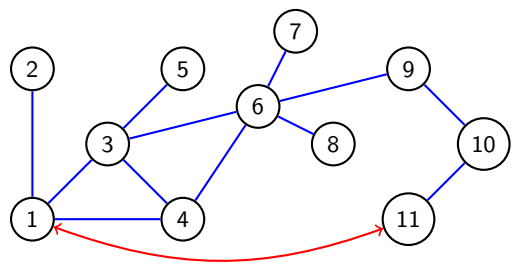
\begin{figure}[ht]
\centering
\begin{tikzpicture}[thick,main node/.style={circle,draw,font=\sffamily\small}]
    \node[main node](1) at (0,0) {1};
    \node[main node](2) at  (0,2) {2};
    \node[main node](3) at  (1,1) {3};
    \node[main node](4) at  (2,0) {4};
    \node[main node](5) at  (2,2) {5};
    \node[main node](6) at  (3,1.5) {6};
    \node[main node](7) at  (3.5,2.5) {7};
    \node[main node](8) at  (4,1) {8};
    \node[main node](9) at  (5,2) {9};
    \node[main node](10) at  (6,1) {10};
    \node[main node](11) at  (5,0) {11};
    \draw[thick,blue] (1) -- (2); \draw[thick,blue] (1) -- (4);\draw[thick,blue] (1) -- (3); \draw[thick,blue] (3) -- (4); \draw[thick,blue] (3) -- (5); \draw[thick,blue] (3) -- (6); \draw[thick,blue] (6) -- (7); \draw[thick,blue] (6) -- (8); \draw[thick,blue] (6) -- (9); \draw[thick,blue] (9) -- (10); \draw[thick,blue] (10) -- (11); \draw[thick,blue] (4) -- (6);
    \draw [<->,red] (1) to [out = -20 ,in = 200] (11);
\end{tikzpicture}
    \caption{For $1\leftrightarrow 11$, we choose the path, $p=\{1,3,6,9,10,11\}$ and use the stabilizer element $S_1 S_6 S_{10} = X_1 Z_2 X_6 Z_7 Z_8 X_{10} Z_{11}$.}
    \label{graph path}
\end{figure}

\section{More than one key per state} \label{multiple keys section}
Note that QKD between parties that are far apart with respect to the underlying graph for the shared graph state will require many measurement outcomes to be made public. Using a product of overlapping stabilizers along a path connecting the two parties only ensures that the public measurement outcomes does not reveal the key to the general public. It is possible that the public outcomes allow some other party possessing a qubit of the shared graph state to determine the secret key. In figure \ref{graph path}, note that the $X_{10}$ public outcome can be used for QKD between parties $9$ and $11$ using $Z_9 X_{10} Z_{11}$. Even though the intended communication is between parties $1$ and $11$, it becomes possible for $9$ to also determine the key using only their local qubit and public information. This can be fixed by changing the stabilizer for QKD to,
\begin{eqnarray}
    \sigma = \prod_{k\in P(i,j)} S_k,
\end{eqnarray}
where, $P(i,j)$ is the set of all vertices along some path connecting $i$ with $j$. On the other hand, constructing $\sigma$ from a subset of the vertices in $P(i,j)$ with overlapping stabilizers can be used to realize conference keys. By conference keys\cite{1056542,Murta_2020,Pickston2023,Epping_2017,fedrizzi2022quantum}, we are referring to the situation where more than two parties share the same key. In figure \ref{graph path} we have a common key between parties possessing qubits $1,9$ and $11$. A more obvious example of conference keys comes from the generalized GHZ state,
\begin{eqnarray*}
    \ket{GHZ} = \frac{1}{\sqrt{2}}\Big(\ket{0}^{\otimes n} + \ket{1}^{\otimes n}\Big),
\end{eqnarray*}

where the conference keys result from how, $\forall i,j\in [1,2,3,\dots,n]$, $\langle Z_i Z_j \rangle = +1$.   

Since now we are considering multiple QKD pairs and stabilizers per state, we need to formalize these notions. 
Let $\{c_1,c_2,c_3,\dots c_r\}$ be the communicant pairs and their corresponding stabilizer elements used for QKD be $\{\sigma_1,\sigma_2,\sigma_3,\dots\sigma_r\}$. $\sigma_i \backslash c_i$ is the support of $\sigma_i$ besides the two communicants. The measurement outcomes in $\sigma_i \backslash c_i$ is public knowledge.

We list the constraints on the choice of simultaneous stabilizer correlations below\label{simultaneous stabilizer constraint},

\begin{itemize}
    \item \textbf{Constraint 1: } No qubit in any $c_i$ should be in any $\sigma_j \backslash c_j, j\neq i$.  That is, a qubit whose measurement result is meant to be a key should not, also be simultaneously necessary as a permission bit for the distribution of a different key.
    \item \textbf{Constraint 2: }$\forall i,j$ let $A= \text{support}(\sigma_i)\cap \text{support}(\sigma_j)$, then, substring of $\sigma_i$ and $\sigma_j$ on $A$ should be identical. That is, no qubit or party should be required to perform different Pauli measurements simultaneously .
\end{itemize}

Let $\sigma_{i,j}$ and $\sigma_{i,k}$ be two stabilizer elements for QKD between parties $i,j$ and $i,k$ respectively, that satisfy constraints $1$ and $2$, then the stabilizer element $\sigma_{i,j}\cdot \sigma_{i,k}$ is a secure stabilizer for QKD between parties $j,k$. This follows from how $\sigma_{i,j}\cdot \sigma_{i,k}$ has support on $j,k$ and is a product of two overlapping stabilizers whose security we have already proven in the proof of Theorem \ref{main_result}. Now we can state precisely what we refer to as conference key distribution.

\begin{fact}
  In a shared graph state, let a set of communicant pairs form the edges of a connected graph where the communicants belonging to these pairs are the vertices and we use stabilizer elements satisfying constraints $1$ and $2$ for each QKD pair, then all the communicants belonging to these pairs share a common/conference key.  
\end{fact}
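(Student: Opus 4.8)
The plan is to reduce the statement to the two-party result, Theorem~\ref{main_result}, by showing that the bit learned by one distinguished communicant propagates, through the edges of the connected communication graph, to every other communicant, and that Constraints~1 and~2 are exactly what is needed for the individual key exchanges to run simultaneously without leaking. Let the vertices of the communication graph $H$ be the communicants and its edges be the pairs $c_1,\dots,c_r$ with the chosen secure stabilizers $\sigma_1,\dots,\sigma_r$. For an edge $c_i=(a_i,b_i)$, the relation $\sigma_i\ket{G}=(-1)^{s_i}\ket{G}$ becomes, exactly as for equation~\ref{facchi_bin} in the proof of Theorem~\ref{main_result}, a single $\mathbb{Z}_2$ equation $m_{a_i}\oplus m_{b_i}\oplus q_i = s_i$, where $m_v$ is the outcome of the Pauli that communicant $v$ measures and $q_i$ is the parity of the public permission outcomes on $\sigma_i\setminus c_i$. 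Here Constraint~2 guarantees that $m_v$ is well defined, since a communicant appearing in several pairs is forced to measure one and the same Pauli, and Constraint~1 guarantees that no communicant's outcome $m_v$ is ever among the public permission bits of another pair. Thus each edge of $H$ furnishes a relation $m_{a_i}\oplus m_{b_i}=p_i$ with $p_i:=s_i\oplus q_i$ publicly known.

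Next I would fix a spanning tree $T$ of $H$ (it exists because $H$ is connected) and a root communicant $v_\star$. Summing the edge relations along the unique $T$-path from $v_\star$ to a vertex $v$ gives $m_v = m_{v_\star}\oplus P_v$ with $P_v$ a parity of public bits, so every communicant $v$ can compute the common string $K:=m_{v_\star}$ from its own outcome and the public transcript. The cleanest write-up is an induction on the edges of $H$ listed in a connected (breadth-first) order: the base case is a single edge, which is precisely Theorem~\ref{main_result}, and the inductive step adjoins an edge sharing at least one vertex with the subgraph processed so far, so that the already-established key is handed to the new endpoint through its $p_i$-relation. Cycles in $H$ create no inconsistency: the $m_v$ are genuine measurement results of the single state $\ket{G}$, all measured Paulis sit on distinct qubits and hence commute, and the product of the $\sigma_i$ around any cycle is again a stabilizer element whose binary relation is the sum of the edge relations, so every cycle closes automatically.

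It then remains to argue that $K$ is uniformly random and hidden from the public and from the non-communicant parties, which I would obtain by inheritance from the pairwise case: $m_{v_\star}$ is a uniformly random bit that, by the security proven in Theorem~\ref{main_result} and Appendix~\ref{substring AME} for the edge at $v_\star$, is computable neither from the public transcript alone nor by any party outside that edge's pair, and $K$ differs from $m_{v_\star}$ only by public bits. The step that needs genuine care — and the main obstacle — is showing that Constraints~1 and~2 are \emph{sufficient}, not merely necessary, so that running all $r$ exchanges at once does not leak: no combination of the revealed substrings $\{\sigma_i\setminus c_i\}$, even together with the local qubit of a single non-communicant, may reproduce a substring $c$-relation forbidden by Fact~\ref{substring condition}. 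I would handle this by the path-and-overlap analysis of Lemma~\ref{no substring using path} within the proof of Theorem~\ref{main_result}, applied to the product stabilizer $\prod_k \sigma_k$ over the communicants along a path of $H$: Constraint~1 forbids using a communicant's key bit as a permission bit, Constraint~2 forbids any qubit from carrying two different Paulis, and together they force any stabilizer substring supported on a communicant pair to already contain the whole chain — exactly the non-leakage condition of Fact~\ref{substring condition}. The remark preceding the statement, that $\sigma_{i,j}\cdot\sigma_{i,k}$ is again secure for $(j,k)$ under these constraints, is the one-step instance of this argument.
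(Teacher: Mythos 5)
Your proposal is correct and follows essentially the same route as the paper: the paper justifies this fact by the remark that a product of overlapping secure stabilizers $\sigma_{i,j}\cdot\sigma_{i,k}$ satisfying Constraints 1 and 2 is again a secure stabilizer for the outer pair (appealing to the proof of Theorem~\ref{main_result}), which is exactly the algebra your edge-wise parity relations and spanning-tree propagation make explicit. Your write-up is more detailed than the paper's (which gives only that one-step remark plus the worked examples), but the underlying argument --- chaining pairwise stabilizer relations over the connected communication graph, with Constraint~2 ensuring measurement compatibility and Constraint~1 ensuring no communicant's bit is published --- is the same.
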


Consider the graph state in figure \ref{conference key}. Let $1\leftrightarrow 2$ be accomplished with $S_1 = Z_5 X_1 Z_2$ and $1\leftrightarrow 3$ with $S_1 \cdot S_3 = X_1 Z_5 Z_4 X_3$. The public outcome of $Z_4$ results in $2\leftrightarrow 3$ also being possible using $S_3 = Z_2 X_3 Z_4$.   We list all distinct three party conference key scenarios using this $5$-qubit state in figure \ref{conference key}. 

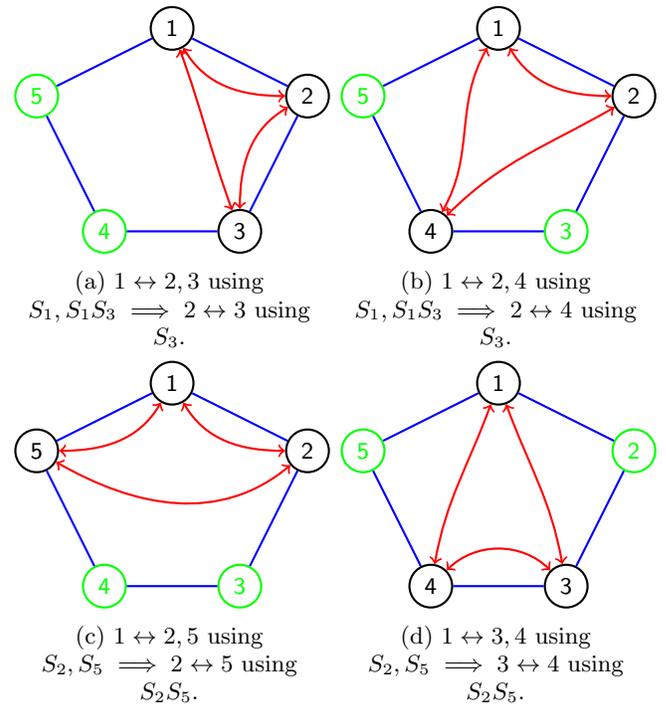
\begin{figure}[ht]
    \centering
    \begin{subfigure}[]{0.23\textwidth}
        
        \begin{tikzpicture}[thick,main node/.style={circle,draw,font=\sffamily\small},scale=0.6, every node/.style={scale=1}]
    \node[main node,green](4) at (0,0) {4};
    \node[main node](3) at  (3,0) {3};
    \node[main node](2) at  (4.5,3) {2};
    \node[main node](1) at  (1.5,4.5) {1};
    \node[main node,green](5) at  (-1.5,3) {5};
    \draw[thick,blue] (1) -- (2) -- (3) -- (4) -- (5) -- (1);
    \draw [<->,red] (1) to [out = -60,in = 180] (2);
    \draw [<->,red] (1) to [out = -70 ,in = 110] (3);
    \draw [<->,red] (2) to [out = 210 ,in = 90] (3);
\end{tikzpicture}
        \subcaption{$1\leftrightarrow 2,3$ using $S_1, S_1 S_3 \implies 2\leftrightarrow 3$ using $S_3$.}
    \end{subfigure}%
    ~ 
    \begin{subfigure}[]{0.23\textwidth}
        
        \begin{tikzpicture}[thick,main node/.style={circle,draw,font=\sffamily\small},scale=0.6, every node/.style={scale=1}]
    \node[main node](4) at (0,0) {4};
    \node[main node,green](3) at  (3,0) {3};
    \node[main node](2) at  (4.5,3) {2};
    \node[main node](1) at  (1.5,4.5) {1};
    \node[main node,green](5) at  (-1.5,3) {5};
    \draw[thick,blue] (1) -- (2) -- (3) -- (4) -- (5) -- (1);
    \draw [<->,red] (1) to [out = -60,in = 180] (2);
    \draw [<->,red] (1) to [out = -120 ,in = 60] (4);
    \draw [<->,red] (4) to [out = 45 ,in = 210] (2);
\end{tikzpicture}
        \subcaption{$1\leftrightarrow 2,4$ using $S_1, S_1 S_3 \implies 2\leftrightarrow 4$ using $S_3$.}
    \end{subfigure}
    \begin{subfigure}[]{0.23\textwidth}
        
        \begin{tikzpicture}[thick,main node/.style={circle,draw,font=\sffamily\small},scale=0.6, every node/.style={scale=1}]
    \node[main node,green](4) at (0,0) {4};
    \node[main node,green](3) at  (3,0) {3};
    \node[main node](2) at  (4.5,3) {2};
    \node[main node](1) at  (1.5,4.5) {1};
    \node[main node](5) at  (-1.5,3) {5};
    \draw[thick,blue] (1) -- (2) -- (3) -- (4) -- (5) -- (1);
    \draw [<->,red] (1) to [out = -60,in = 180] (2);
    \draw [<->,red] (1) to [out = -120 ,in = 0] (5);
    \draw [<->,red] (5) to [out = -30 ,in = 220] (2);
\end{tikzpicture}
        \subcaption{$1\leftrightarrow 2,5$ using $S_2, S_5 \implies 2\leftrightarrow 5$ using $S_2 S_5$.}
    \end{subfigure}%
    ~ 
    \begin{subfigure}[]{0.23\textwidth}
        
        \begin{tikzpicture}[thick,main node/.style={circle,draw,font=\sffamily\small},scale=0.6, every node/.style={scale=1}]
    \node[main node](4) at (0,0) {4};
    \node[main node](3) at  (3,0) {3};
    \node[main node,green](2) at  (4.5,3) {2};
    \node[main node](1) at  (1.5,4.5) {1};
    \node[main node,green](5) at  (-1.5,3) {5};
    \draw[thick,blue] (1) -- (2) -- (3) -- (4) -- (5) -- (1);
    \draw [<->,red] (1) to [out = -70,in = 100] (3);
    \draw [<->,red] (1) to [out = -110 ,in = 80] (4);
    \draw [<->,red] (4) to [out = 45 ,in = 135] (3);
\end{tikzpicture}
        \subcaption{$1\leftrightarrow 3,4$ using $S_2, S_5 \implies 3\leftrightarrow 4$ using $S_2 S_5$.}
    \end{subfigure}
    
    \caption{ The red double headed arrows represent communicants with shared keys. Qubits whose measurement results are public are green.}
    \label{conference key}
\end{figure}

In figure \ref{implicit conference keys long chain}, let parties $1$ and $9$ attempt QKD using the stabilizer element $S_2 \cdot S_4 \cdot S_6 \cdot S_8$. This requires $X_2, X_4, X_6$ and $X_8$ measurement outcomes being made public. Since parties $3,5,7$ were not required to participate in QKD between $1$ and $9$, they remain free to use the stabilizer correlations $S_2$ for $1\leftrightarrow 3$, $S_4$ for $3\leftrightarrow 5$, $S_6$ for $5\leftrightarrow 7$ and $S_8$ for $7\leftrightarrow 9$ while satisfying constraints $1$ and $2$ and therefore set up a conference QKD between parties $\{1,3,5,7,9\}$ since now the QKD pairs form the edges of a connected graph. 

\begin{figure}[ht]
\centering

\begin{tikzpicture}[thick,main node/.style={circle,draw,font=\sffamily\small}]
    \node[main node](1) at (0,0) {1};
    \node[main node,green](2) at  (0.75,0) {2};
    \node[main node](3) at  (1.5,0) {3};
    \node[main node,green](4) at  (2.25,0) {4};
    \node[main node](5) at  (3,0) {5};
    \node[main node,green](6) at  (3.75,0) {6};
    \node[main node](7) at  (4.5,0) {7};
    \node[main node,green](8) at  (5.25,0) {8};
    \node[main node](9) at  (6,0) {9};
    \node[main node](10) at  (6.75,0) {10};
    \draw[thick,blue] (1)--(2)--(3)--(4)--(5)--(6)--(7)--(8)--(9)--(10);
    
    \draw[loosely dotted] (-1,0) -- (1);
    \draw[loosely dotted] (10) -- (7.75,0);
    \draw [<->,red] (1) to [out = 35,in = 145] (9);
    \draw [<->,red] (1) to [out = -60,in = 240] (3); \draw [<->,red] (3) to [out = -60,in = 240] (5); \draw [<->,red] (5) to [out = -60,in = 240] (7); \draw [<->,red] (7) to [out = -60,in = 240] (9);
    
\end{tikzpicture}
    \caption{Performing $1\leftrightarrow 9$ using $S_2 \cdot S_4 \cdot S_6 \cdot S_8$ results in enough public measurement outcomes to allow a conference key setup between all the qubits in $\{1,3,5,7,9\}$. }
    \label{implicit conference keys long chain}
\end{figure}
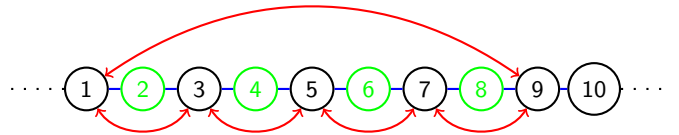
\begin{fact}
  If multiple non-overlapping pairs of communicants (the pairs of communicants do not form the vertices of a connected graph) perform QKD using stabilizers satisfying constraints $1$ and $2$, then we can distribute multiple, parallel or independent keys per resource state.  \end{fact}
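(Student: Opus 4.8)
The plan is to decompose the scenario by the connected components of the communicant graph, run the conference-key argument inside each component, and then argue that distinct components yield genuinely separate, simultaneously usable keys. Write the communicant graph (vertices $=$ communicants, edges $=$ the QKD pairs) as a disjoint union $C_1\sqcup\dots\sqcup C_m$ of connected components, with $m\ge 2$ by hypothesis. First I would note that, because the chosen $\sigma_i$ satisfy Constraints~1 and~2, every physical qubit is asked to perform at most one well-defined Pauli measurement: Constraint~2 forbids conflicting demands on overlapping supports, and Constraint~1 keeps every key qubit out of every cooperator set; hence all the measurements commute and can be carried out on a single copy of the resource state, which gives the ``parallel'' part. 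Next, restricting to the pairs inside a fixed $C_\alpha$, the hypothesis of the preceding conference-key Fact holds, so all communicants in $C_\alpha$ share one common bit $K_\alpha$, which we may take to be the outcome of a fixed designated communicant qubit $q_\alpha\in C_\alpha$ (the remaining communicant outcomes in $C_\alpha$ equal $K_\alpha$ times publicly known signs). Since the $C_\alpha$ are vertex-disjoint, $K_1,\dots,K_m$ are held by disjoint sets of parties, so we get $m$ distinct keys from one shared state.

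It then remains to argue that each $K_\alpha$ is a usable secret. The public data consists of the cooperator outcomes of all components; by Constraint~1 these sit on qubits disjoint from every $q_\alpha$. Inside $C_\alpha$ the QKD stabilizers are, as in the proof of Theorem~\ref{main_result}, products of overlapping graph-state generators along paths, to which Lemma~\ref{no substring using path} applies; with Fact~\ref{substring condition} this shows $M_{q_\alpha}$ is not fixed by the stabilizer together with the cooperator outcomes \emph{of $C_\alpha$}. Promoting this to \emph{all} cooperator outcomes is the content of the last paragraph; granting it, each $K_\alpha$ is a uniformly random bit that the public cannot reconstruct, and if one uses the full-path stabilizers $\prod_{k\in P}S_k$ as in Section~\ref{multiple keys section} it is also not reconstructible by another sharer of the state. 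That is the sense in which the keys are ``independent'': $m$ separate secret keys on one resource state.

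The delicate point, which I expect to be the main obstacle, is ruling out correlations \emph{between} the keys, i.e.\ showing that for every nonempty $A\subseteq\{1,\dots,m\}$ no stabilizer element has the form $\prod_{\alpha\in A}M_{q_\alpha}\cdot R$ with $R$ supported only on cooperator qubits; by the standard statistics of stabilizer measurements this is exactly what makes $(K_1,\dots,K_m)$ jointly uniform. I would first reduce an arbitrary stabilizer element supported on communicant qubits plus cooperators to this normal form by multiplying by path-products of the $\sigma_i$ within each component — these generate the even-weight deformations of a component's communicant support and, by Constraint~1, add only cooperator weight — shrinking each component's footprint to $\emptyset$ or to the single qubit $q_\alpha$. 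Contradicting the normal form is where care is needed, because \emph{without a further hypothesis it can genuinely fail}: for the $K_4$ graph state the pairs $(1,2)$ and $(3,4)$ can be served with $S_1S_2=Y_1Y_2$ and $S_3S_4=Y_3Y_4$ (no cooperators, Constraints~1--2 vacuous), yet $S_1S_3=Y_1Y_3\in\mathcal S$ forces the two keys to coincide. The resolution is to assume distinct components use graph-disjoint, mutually non-adjacent resources (the way they arise in practice, e.g.\ figure~\ref{implicit conference keys long chain}); then the stabilizer group of the union of the regions factors over the regions, the normal-form element restricts to a nontrivial single-component stabilizer of the type forbidden by Fact~\ref{substring condition}, and joint uniformity — hence genuine independence of the $K_\alpha$ — follows.
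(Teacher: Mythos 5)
Your first two paragraphs reconstruct what the paper actually intends by this Fact, which it supports only by the example of figure \ref{Parallel keys}: Constraints 1 and 2 make all requested single-qubit measurements well defined and simultaneously performable on one copy of the state, and each pair's key inherits its individual security against the public from Fact \ref{substring condition} and the Theorem \ref{main_result} construction. The paper never asserts, let alone proves, statistical independence of the different keys --- indeed at the start of section \ref{multiple keys section} it concedes that other sharers of the state may sometimes learn a key. Your $K_4$ observation is therefore a legitimate sharpening rather than a misreading: with $\sigma_{(1,2)}=S_1S_2=Y_1Y_2$ and $\sigma_{(3,4)}=S_3S_4=Y_3Y_4$ the constraints hold vacuously, yet $S_1S_3=Y_1Y_3\in\mathcal{S}$ makes the two ``keys'' coincide, so under the strong reading of ``independent'' the Fact needs an extra hypothesis, exactly as you say.

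The genuine gap is in your proposed repair. It rests on the claim that if distinct components use graph-disjoint, mutually non-adjacent regions, then the subgroup of stabilizer elements supported on the union of the regions factors over the regions. That is false. In the $4$-cycle graph state the non-adjacent singletons $\{1\}$ and $\{3\}$ each support only the identity, yet $S_1S_3=X_1X_3$ is supported on their union; in the $6$-cycle the mutually non-adjacent edges $\{1,2\}$ and $\{4,5\}$ each support only the identity, yet $S_1S_2S_4S_5=Y_1Y_2Y_4Y_5$ lies in the subgroup on the union. So the contradiction you want in the normal-form step does not follow from non-adjacency. Moreover, factorization of the region-supported subgroup is not even the right invariant: what controls correlations among the keys is the subgroup of $\mathcal{S}$ consistent with the single-qubit observables actually measured (identity on unmeasured qubits, the measured Pauli or identity elsewhere). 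In the $6$-cycle scenario with $\sigma$'s $S_1$ and $S_5$ the cross element $Y_1Y_2Y_4Y_5$ is harmless because nobody measures $Y$, whereas in $K_4$ the cross element $Y_1Y_3$ is consistent with the measurements and destroys independence. A correct strengthening would require that no measurement-consistent stabilizer element connects the key qubits of two different components modulo publicly measured qubits, and deriving that from a graph-theoretic condition is an additional argument your proposal does not supply.
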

  See figure \ref{Parallel keys} for an example.

\begin{figure}[ht]
    \centering
    \begin{subfigure}[t]{0.23\textwidth}
        
        \begin{tikzpicture}[thick,main node/.style={circle,draw,font=\sffamily\small\bfseries},scale=0.6, every node/.style={scale=1}]
    \node[main node](4) at (0,0) {4};
    \node[main node,green](3) at  (3,0) {3};
    \node[main node](2) at  (4.5,3) {2};
    \node[main node](1) at  (1.5,4.5) {1};
    \node[main node](5) at  (-1.5,3) {5};
    \draw[thick,blue] (1) -- (2) -- (3) -- (4) -- (5) -- (1);
    \draw [<->,red] (1) to [out = -60,in = 180] (2);
    \draw [<->,red] (5) to [out = 0 ,in = 60] (4);
\end{tikzpicture}
    \end{subfigure}%
    ~ 
    \begin{subfigure}[t]{0.23\textwidth}
        
        \begin{tikzpicture}[thick,main node/.style={circle,draw,font=\sffamily\small\bfseries},scale=0.6, every node/.style={scale=1}]
    \node[main node](4) at (0,0) {4};
    \node[main node](3) at  (3,0) {3};
    \node[main node,green](2) at  (4.5,3) {2};
    \node[main node](1) at  (1.5,4.5) {1};
    \node[main node](5) at  (-1.5,3) {5};
    \draw[thick,blue] (1) -- (2) -- (3) -- (4) -- (5) -- (1);
    \draw [<->,red] (1) to [out = -80,in = 120] (3);
    \draw [<->,red] (5) to [out = 0 ,in = 60] (4);
\end{tikzpicture}
    \end{subfigure}
    \caption{(a) $1\leftrightarrow2$ using $Z_1 X_2 Z_3$, $4\leftrightarrow5$ using $Z_5 X_4 Z_3$. (b) $1\leftrightarrow3$ using $Z_1 X_2 Z_3$, $4\leftrightarrow5$ using $S_2 \cdot S_4 \cdot S_5=X_2 Y_4 Y_5$. Note that a common public measurement outcome allows both independent QKD events.}
    \label{Parallel keys}
\end{figure}
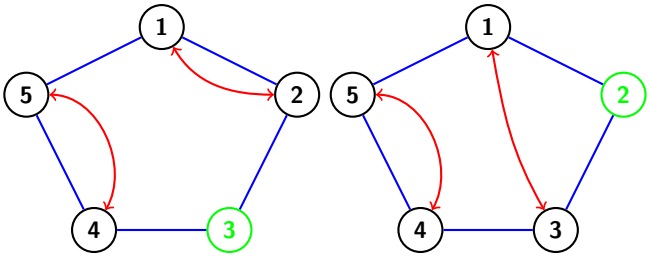

\textit{Fixed set of ``authorizers" for QKD across some bipartition.\textemdash}The fact that QKD using shared graph states requires the use of ``overlapping stabilizers" lends importance to the topology of the underlying graph. If one uses a long chain of qubits to entangle a graph state across a separable bipartition, the parties along this long chain become necessary authorizers for any communication across this bipartition as shown in figure \ref{long chain auth}. In fact, increasing the number of parties along this chain increases the number of permission bits needed for QKD between the parties in $G_1$ and $G_2$. 

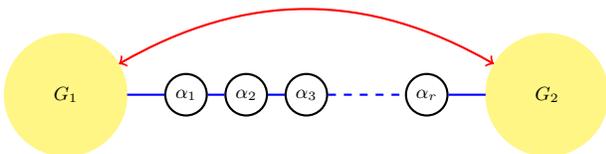
\begin{figure}[ht]
\centering
\begin{tikzpicture}[thick,main node/.style={circle,draw,font=\sffamily\small\bfseries},scale=0.8, every node/.style={scale=0.8},roundnode/.style={circle, draw=yellow!60, fill=yellow!60, very thick, minimum size=2 cm}]
    \node[roundnode](1) at (0,0) {$G_1$};
    \node[roundnode](2) at (8,0) {$G_2$};
    \node[main node](3) at  (2,0) {$\alpha_{1}$};
    \node[main node](4) at  (3,0) {$\alpha_{2}$};
    \node[main node](5) at  (4,0) {$\alpha_{3}$};
    \node[main node](6) at  (6,0) {$\alpha_{r}$};
    \draw[thick,blue] (1)--(3)--(4)--(5);
    \draw[dashed,blue] (5)--(6);
    \draw[thick,blue] (6)--(2);
    \draw [<->,red] (1) to [out = 30 ,in = 150] (2);
\end{tikzpicture}
    \caption{QKD between any party in graph $G_1$ and a party in graph $G_2$ requires the cooperation of at least $\lfloor r/2 \rfloor$ parties in $\{ \alpha_1, \alpha_2, \alpha_3, \dots, \alpha_r \}$ such that their corresponding stabilizers overlap.}
    \label{long chain auth}
\end{figure}

\section{Summary and outlook}
We have proven how shared $n$-qudit AME stabilizer states allow for QKD protocols that require the cooperation of a minimum number ($\lfloor n/2\rfloor-1$) of other parties sharing the state. AME structure makes this minimum number necessary, and the stabilizer structure makes this minimum number sufficient. One can explore if AME structure alone is sufficient for MAKD. Either all AME qudit states have a stabilizer structure up to some choice of local basis, in which case AME structure alone allows MAKD. On the other hand, if there exists AME states that are not LU equivalent to any stabilizer state, then one can explore if QKD can still be accomplished using the partial correlations between local observables. 

In this work, the self-testing section simply borrowed a Bell type inequality from \cite{PhysRevLett.124.020402}. The self-testing we have used here does not explicitly take the AME structure into account. Since it is known that the ability to perform local self-testing of shared states is dependent on the entanglement content of the state\cite{Makuta_2021,_upi__2018}, there may exist more efficient protocols\cite{sarkar2021self,jebarathinam2019maximal,santos2023scalable} for self-testing AME states in particular. In addition, we are yet to perform a careful analysis of the key rates\cite{Grasselli_2018,oslovich2024efficientmultipartyquantumkey}. 

We have illustrated with three variants(conference keys, independent keys and being able to have some fixed set of ``authorizers" for QKD between two bipartitions) how the underlying topology of the graph state can be used to realize various constrained QKD protocols. One can attempt to find many more customized applications for entangled stabilizer/graph states. Considering the rapid progress in experiments to construct graph states\cite{Huang2023,Thomas_2022,Lu_2007,huang2023chip,hoyer2006resources,gnatenko2021entanglement,mooney2021generation,li2020multiphoton,teper2020thermal,Cooper_2024,Xu_2021} and demonstration of various protocols\cite{Bell_2014,lu2024quantum,joo2013generating,markham2020simple,zander2024benchmarking,Lee_2012} using them, our findings in this work appear feasible to implement in the near term.   

\textit{ Acknowledgements.\textemdash} We acknowledge several encouraging discussions with Guruprasad Kar. S.S acknowledges Manisha Goyal for help with the proof of Lemma 1 and Ayan Biswas for performing some preliminary numerical checks of the protocol\cite{qiskit_noise_model,Hunter:2007}. S.S acknowledges the fellowship support of IISER-Kolkata. This research was supported in part by the International Centre for Theoretical Sciences (ICTS) for participating in the program Condensed Matter Meets Quantum Information (ICTS/COMQUI2023/09).

\twocolumngrid

\bibliography{references}
\appendix
\section{Proof of Theorem \ref{AME theorem}}\label{proof theorem 1}
\begin{proof}
Let $\ket{\psi}$ be an $n$-qubit, AME, stabilizer state. Let $\mathcal{S}$ be the stabilizer group of $\ket{\psi}$. We have,
\begin{eqnarray}
    \ket{\psi}\bra{\psi} \equiv \rho = \frac{1}{2^n}\sum_{\sigma\in \mathcal{S}}\sigma.
\end{eqnarray}
The reduced density matrix of a subset of qubits $A$ is given by,
\begin{eqnarray}
    \rho_A = \frac{1}{2^{|A|}} \sum_{\sigma \in \mathcal{S}_A} \sigma = \frac{1}{2^{|A|}}\prod_{\sigma\in\{S_i^A\}}\Big(I+\sigma\Big),
\end{eqnarray}
where, $\mathcal{S}_A$ represents the subgroup of $\mathcal{S}$ with support only on the set of qubits $A$ and $\{S_i^A\}$ are the generators of $\mathcal{S}_A$. ``$I$" refers to the identity matrix whose dimensions are clear from the context in which it is used. 

Since all elements of the stabilizer group commute and they all square to identity, the number of elements of a stabilizer group for qubits is always some power of two. This applies to it's subgroups as well. Let $|\mathcal{S_A}| = 2^k \implies |\{S_i^A\}|= k$. Now,
\begin{eqnarray}
    \rho_A = \frac{1}{2^{|A|-k}}\prod_{\sigma\in \{S_i^A\}}\frac{I+\sigma}{2}.
\end{eqnarray}
Note that,
\begin{eqnarray*}
    \Big(\prod_{\sigma\in \{S_i^A\}}\frac{I+\sigma}{2}\Big)^2 = \prod_{\sigma\in \{S_i^A\}}\frac{I+\sigma}{2}. 
\end{eqnarray*}
Therefore, the eigenvalues of $\prod_{\sigma\in \{S_i^A\}}\frac{I+\sigma}{2}$ are $\{0,1\}$. Let the multiplicity of the ``$1$" eigenvalue be $r$. Since $\Tr{\rho_A} = 1$, therefore,
\begin{align*}
    \frac{r}{2^{|A|-k}} = 1 \implies r = 2^{|A|-k}.
\end{align*}
Recall that $\ket{\psi}$ is an AME state. Therefore,
\begin{align*}
    \rho_A = \frac{I}{2^{|A|}} \;;\; |A|\in [1,\lfloor n/2\rfloor].
\end{align*}
Since entanglement entropy is symmetric across a bipartition, 
\begin{align*}
    E(\rho_A) = E(\rho_B),
\end{align*}
Where $E(\rho)$ is the Von Neumann entropy of $\rho$, $A$ and $B$ are disjoint bipartitions of all qubits in $\ket{\psi}$.
Consider an arbitrary subset of qubits $A$, such that $|A|=\lfloor n/2 \rfloor +1$. Then, the complementary set $B$ of qubits has $|B|=n- \lfloor n/2\rfloor - 1$. Therefore, $E(\rho_A) = E(\rho_B)$ implies,
\begin{align*}
  E\Big(\frac{1}{2^{|A|-k}}\prod_{\sigma\in \{S_i^A\}}\frac{I+\sigma}{2}\Big) = E(\frac{I}{2^{n-\lfloor n/2 \rfloor-1}}),  
\end{align*}
\begin{align*}
    \implies \lfloor n/2 \rfloor + 1 - k = n-\lfloor n/2 \rfloor -1.\\
    \implies k = \begin{cases}
        1 \;\;\text{when $n$ is odd} \\
        2 \;\; \text{when $n$ is even}.
    \end{cases}
\end{align*}
Since the number of stabilizer elements with support only on $A$, $|\mathcal{S}_A| = 2^k$, and since this includes the identity matrix, we have $2^1 -1 = 1$ and $2^2-1 =3$ stabilizer choices for QKD between two qubits in $A$ for odd and even $n$, respectively. 
\end{proof}

\section{Proof that the public bits in MAKD do not allow any unintended parties to compute the key}\label{substring AME}
Consider MAKD between parties $i$ and $j$ using the stabilizer element $\sigma_i \otimes \sigma_j^{\prime} \otimes \sigma_{A}^{\prime\prime}$, where $\sigma_i,\sigma_j^{\prime}$ and $\sigma_{A}^{\prime\prime}$ are Pauli group elements on qubits $i,j$ and set of qubits $A$ respectively. The weight of $\sigma_i \otimes \sigma_j^{\prime} \otimes \sigma_{A}^{\prime\prime}$ is $\lfloor n/2\rfloor + 1$.  Note that the set of qubits $A$ are with the parties that are only involved in permitting MAKD. Therefore, the measurement outcomes of the local operators in $\sigma_A^{\prime\prime}$ are known to the public.
\begin{itemize}
    \item For a member of the public to be capable of computing the key, either $\sigma_i\otimes \sigma_{A}^{\prime\prime}$ or $\sigma_j\otimes \sigma_{A}^{\prime\prime}$ should belong to the stabilizer group of the state. But, this operator has weight $\lfloor n/2\rfloor$ which is less than the weight of the smallest stabilizer in the shared AME state ($\lfloor n/2\rfloor+1$). Therefore, the public cannot compute the key.
    \item For a party $k\notin \{i,j\}\cup A$ to be capable of computing the key, either $\sigma_i\otimes \sigma_{A}^{\prime\prime}\otimes \mathcal{O}_k$ or $\sigma_j^{\prime}\otimes \sigma_{A}^{\prime\prime}\otimes \mathcal{O}_k$ should be a stabilizer element, where $\mathcal{O}_k$ is some observable on qubit $k$. The product of two stabilizer elements is also a stabilizer elements, therefore,
    \begin{eqnarray}
       \sigma_i \otimes \sigma_j^{\prime} \otimes \sigma_{A}^{\prime\prime} \cdot \sigma_i\otimes \sigma_{A}^{\prime\prime}\otimes \mathcal{O}_k = \sigma_j^{\prime} \otimes \mathcal{O}_k  
    \end{eqnarray}
    is a stabilizer element. This stabilizer element has weight of $2$, which is only possible for AME states with $n=2,3$, which are ``trivial" cases of MAKD, since extra cooperators are not necessary for QKD. Therefore, the other parties sharing the state cannot compute the key.
\end{itemize}

\section{Proof of Theorem \ref{qudit ame theorem}}\label{qudit ame proof}
\begin{proof}
We need a basis of matrices analogous to the Pauli group for qubits for the case of qudits. The density matrix of a single qudit can be expanded in a basis consisting of $d^2$ matrices (including the identity matrix) which are finitely generated by the following two traceless matrices,
\begin{eqnarray}
    X(d) = \sum_{l=0}^{d-1} \ket{l\oplus_d 1}\bra{l} \;\;\And\;\; Z(d) = \sum_{l=0}^{d-1}\omega^l \ket{l}\bra{l},
\end{eqnarray}
where, $\oplus_d$ is addition modulo $d$ and $\omega = \exp{2\pi i/d}$ i.e., the $d$'th root of unity. The $d^2$ generalized Paulis finitely generated by the above two matrices are given by,
\begin{eqnarray}
    D^{\mu,\nu} = \exp{i\pi\mu\nu/d}X(d)^\mu Z(d)^\nu.
\end{eqnarray}
An arbitrary $n$-qudit density matrix can be expanded in a basis of $d^{2n}$ tensor products of $D^{\mu,\nu}$'s which are elements of the qudit Pauli group $\mathcal{P}_d$. A stabilizer qudit state can be specified using $n$ commuting and independent elements of $\mathcal{P}_d$. Since $(D^{\mu,\nu})^d = I$, the stabilizer group consists of $d^n$ elements. AME qudit states can be defined in the same way as for qubits, i.e., states whose $\lfloor n/2 \rfloor$ reductions are all maximally mixed. Now consider an $n$-qudit state $\ket{\psi}$ with the stabilizer group $\mathcal{S}$ and some set of qubits $A$. Similarly to the proof of theorem \ref{AME theorem}, we use the equality of the entanglement entropy across a bipartition and the fact that the reduced density matrix of $A$ qudits can be written as a sum over the stabilizer elements with support only on $A$, to arrive at,
\begin{eqnarray}
    E\Big(\frac{I}{d^|A|}\Big) = E\Big(\frac{1}{d^{n-|A|}}\sum_{\sigma\in \mathcal{S}_{N\backslash A}} \sigma\Big), 
\end{eqnarray}
where, $\mathcal{S}_A$ is the subgroup of $\mathcal{S}$ with support only on $|A|$. Using $\log$ of base $d$ to compute entanglement in ``dits",
\begin{eqnarray}
    \lfloor n/2 \rfloor = E\Big(\frac{1}{d^{n-|A|}}\sum_{\sigma\in \mathcal{S}_{N\backslash A}} \sigma\Big).
\end{eqnarray}
Now, $\mathcal{S}_{N\backslash A}$ cannot be a trivial group ($\{I\}$) since $|A|<|N\backslash A|$, therefore it contains at least one stabilizer element.
\end{proof}

  \section{Proof of theorem \ref{main_result}}\label{necessary and sufficient QKD}

\begin{lemma}\label{separable stabilizer}

Given any separable stabilizer state $\ket{a}$ such that $\ket{a} =  \ket{b} \otimes \ket{c}$ implies that $\ket{b}$ and $\ket{c}$ are also stabilizer states. 

\end{lemma}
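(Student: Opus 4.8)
The plan is to argue via the stabilizer group of $\ket{a}$ and the structure of Pauli group elements that respect the tensor decomposition. Write $\ket{a} = \ket{b}_B \otimes \ket{c}_C$ where $B$ and $C$ partition the $n$ qubits, with $|B| = m$ and $|C| = n-m$. Let $\mathcal{S} = \mathcal{S}(\ket{a})$ be the stabilizer group, which has $2^n$ elements and is generated by $n$ independent commuting Paulis that square to identity. The key observation I would start from is that for any $\sigma = \sigma_B \otimes \sigma_C \in \mathcal{P}_n$ (with $\sigma_B$ supported on $B$, $\sigma_C$ on $C$), we have $\sigma \ket{a} = (\sigma_B\ket{b}) \otimes (\sigma_C \ket{c})$, so $\sigma$ stabilizes $\ket{a}$ (i.e. $\sigma\ket{a} = +\ket{a}$) if and only if $\sigma_B\ket{b} = \lambda \ket{b}$ and $\sigma_C\ket{c} = \lambda^{-1}\ket{c}$ for some phase $\lambda$; since $\sigma_B, \sigma_C$ are Hermitian with $\sigma_B^2 = \sigma_C^2 = I$, the only possibilities are $\lambda = \pm 1$.

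Next I would define $\mathcal{S}_B = \{\sigma_B : \sigma_B \otimes I_C \in \mathcal{S}\}$ and $\mathcal{S}_C$ analogously, i.e. the reduced groups of stabilizer elements supported entirely on one side. These are subgroups of $\mathcal{P}_{|B|}$, $\mathcal{P}_{|C|}$ respectively, all of whose elements have eigenvalue $+1$ on $\ket{b}$ (resp. $\ket{c}$). The heart of the argument is a counting step: I claim $|\mathcal{S}_B| = 2^m$, which forces $\ket{b}$ to be the unique joint $+1$-eigenstate of $2^m$ independent commuting Paulis on $m$ qubits, hence a stabilizer state (and symmetrically for $\ket{c}$). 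To get this, consider the reduced density matrix $\rho_B = \mathrm{Tr}_C \ket{a}\bra{a} = \ket{b}\bra{b}$, which is pure. By the same Pauli-expansion argument used in the proof of Theorem~\ref{AME theorem}, $\rho_B = \frac{1}{2^m}\sum_{\sigma_B \in \mathcal{S}_B'} \sigma_B$ where $\mathcal{S}_B'$ is the group of Paulis on $B$ obtained by restricting elements of $\mathcal{S}$ that act as a phase $\pm 1$ times something supported on $B$... more cleanly: $\mathrm{Tr}_C(\sigma_B \otimes \sigma_C)$ is nonzero only when $\sigma_C \propto I_C$, so $\rho_B = \frac{1}{2^n}\sum_{\sigma_B \otimes I_C \in \mathcal{S}} 2^{n-m}\,\sigma_B = \frac{1}{2^m}\sum_{\sigma_B \in \mathcal{S}_B}\sigma_B$. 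Since $\rho_B$ is a pure projector we need $\mathrm{Tr}(\rho_B^2) = 1$, and $\mathrm{Tr}(\rho_B^2) = \frac{1}{2^{2m}}\sum_{\sigma,\tau \in \mathcal{S}_B}\mathrm{Tr}(\sigma\tau) = \frac{|\mathcal{S}_B|}{2^m}$ (using orthogonality of Paulis and that $\mathcal{S}_B$ is a group, so $\sigma\tau = I$ has exactly $|\mathcal{S}_B|$ solutions). Hence $|\mathcal{S}_B| = 2^m$. Then $\mathcal{S}_B$ is an abelian subgroup of the Pauli group on $m$ qubits of maximal order $2^m$ not containing $-I$ (it can't, since every element fixes the state $\ket{b}$), so it is generated by $m$ independent commuting Paulis and $\ket{b}$ is exactly the stabilizer state they define. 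The same for $C$.

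The main obstacle I anticipate is the bookkeeping around phases and the claim that $\mathcal{S}_B$ really has order $2^m$ rather than merely being "large" — i.e. ruling out the possibility that stabilizer elements of $\ket{a}$ only act non-trivially across the cut. The purity of $\rho_B$ (forced by $\ket{a}$ being a product state across $B|C$) is exactly what kills this, and the trace-of-square computation makes it quantitative; I'd want to state that lemma (Pauli orthogonality $+$ group structure $\Rightarrow \mathrm{Tr}(\rho^2) = |\mathcal{S}_B|/2^m$) carefully since it is reused from the Theorem~\ref{AME theorem} proof. A secondary subtlety is confirming $-I \notin \mathcal{S}_B$, but this is immediate: any element of $\mathcal{S}_B$ stabilizes $\ket{b} \neq 0$, so it cannot equal $-I$. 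Once $|\mathcal{S}_B| = 2^m$ and $-I \notin \mathcal{S}_B$ are in hand, the conclusion that $\ket{b}$ (and $\ket{c}$) is a stabilizer state is standard.
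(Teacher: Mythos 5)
Your proof is correct, but it takes a genuinely different route from the paper. The paper works in the computational basis: it invokes a coefficient characterization of stabilizer states (a power-of-two number of nonzero, equal-weight amplitudes) and shows through three sublemmas that the tensor factors $\ket{b}$ and $\ket{c}$ inherit these coefficient properties from $\ket{a}$. You instead work at the level of the stabilizer group itself: you reuse the reduced-density-matrix expansion $\rho_B=\frac{1}{2^{m}}\sum_{\sigma_B\in\mathcal{S}_B}\sigma_B$ (the same machinery as in the proof of Theorem~\ref{AME theorem}), and the purity of $\rho_B=\ket{b}\bra{b}$ forced by the product structure, together with Pauli orthogonality, gives $\Tr(\rho_B^2)=|\mathcal{S}_B|/2^{m}=1$, hence $|\mathcal{S}_B|=2^{m}$; since $\mathcal{S}_B$ is abelian, closed, and cannot contain $-I$ (every element fixes $\ket{b}\neq 0$ — note this fact is also what guarantees no $\Tr(\sigma\tau)=-2^{m}$ terms appear in your trace computation), $\ket{b}$ is the unique joint $+1$-eigenstate of a rank-$m$ stabilizer group, i.e.\ a stabilizer state, and symmetrically for $\ket{c}$. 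What each approach buys: the paper's argument is more elementary but leans on the cited amplitude characterization being a complete (if-and-only-if) criterion and only tracks real signs, whereas yours is self-contained within the standard stabilizer formalism and handles phases automatically; moreover, your counting shows directly that the restricted subgroups $\mathcal{S}_B\otimes\mathcal{S}_C$ exhaust $\mathcal{S}$ (since $2^{m}\cdot 2^{n-m}=2^{n}$), which is exactly the factorization $\mathcal{S}=\mathcal{S}(\phi)\otimes\mathcal{S}(\chi)$ that the paper must re-derive separately inside the proof of Lemma~\ref{necessary MPA-QKD}, and your argument generalizes with no change to the qudit setting used in Theorem~\ref{qudit ame theorem}.
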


\begin{proof} We require the following definition of stabilizer states from \cite{Scott_qcbook_2018},
\begin{definition}
    $n$-qubit stabilizer states in the computational basis have $2^l$ non-zero coefficients of equal weight, where $l\in \mathbb{Z}_+$ and $l \leq n$.
\end{definition}
     We prove that $\ket{b}$ and $\ket{c}$ are also stabilizer states by showing that they satisfy the above definition when $\ket{a}$ is a stabilizer state.

\begin{sublemma}
    The number of non-zero coefficients in $\ket{b}$ and $\ket{c}$ are powers of $2$.
\end{sublemma}
\begin{proof}
 Let $\ket{a}$ be an $n$-qubit stabilizer state. Up to some normalization constant, all coefficients of $\ket{a}$ in the computational basis, $ a_i \in \{0,\pm 1\} $\cite{796376}. The number of non-zero $a_i$'s is $2^l$, where $l\leq n$. Let the number of non-zero coefficients in $\ket{b},\ket{c}$ be $\lambda$ and $\tau$. Since $\ket{a} = \ket{b}\otimes \ket{c}$, the number of non-zero coefficients are equal on both sides, and therefore, 
 \begin{eqnarray*}
  2^l = \lambda \times \tau,   
 \end{eqnarray*}
 therefore, $\lambda$ and $\tau$ are also powers of $2$.     
\end{proof}

\begin{sublemma}
    $\ket{b}$ and $\ket{c}$ can be written such that one of their coefficients each is $1$ and the overall normalization constants for both take the form $\frac{1}{2^{r/2}}$ and $\frac{1}{2^{s/2}}$ respectively, where $r,s \in \mathbb{Z}_+$.
\end{sublemma}
\begin{proof}
The normalization constant for $\ket{a}$ is $\frac{1}{2^{n/2}}$. Consider a non-zero coefficient of $\ket{a}$, $a_i$. Comparing $a_i$ with its corresponding element in $\ket{b}\otimes \ket{c}$ we have,
\begin{eqnarray*}
    \frac{1}{2^{n/2}} a_i = \frac{1}{\sqrt{N_b} \sqrt{N_c}}b_j c_k,
\end{eqnarray*}
where, $\frac{1}{\sqrt{N_b}}$ and $\frac{1}{\sqrt{N_c}}$ are normalization constants for $\ket{b}$ and $\ket{c}$. Without loss of generality, we can replace any one specific choice of $b_j$ and $c_k$ with $1$'s and adjust the normalization constant accordingly. The same ``normalization trick" can be used on the left hand side of the above equation to replace $a_i$ with $1$. Now, we have,
\begin{eqnarray*}
    \frac{1}{2^{n/2}} = \frac{1}{\sqrt{N_b N_c}}.
\end{eqnarray*}
Therefore, $N_b$ and $N_c$ are also positive integer powers of two.    
\end{proof}

\begin{sublemma}
 All non-zero elements of $\ket{b}$ and $\ket{c}$ are equal weight.   
\end{sublemma}
\begin{proof}
 Comparing the non-zero coefficients on both sides of $\ket{a} = \ket{b}\otimes \ket{c}$, we have,
\begin{eqnarray*}
    \pm = a_i b_j \;\; \text{and} \;\; \pm a_i b_k \implies b_j = \pm b_k.
\end{eqnarray*}
similarly we can show that all non-zero elements of $\ket{c}$ are equal weight.   
\end{proof}

From the sublemmas $1$,$2$ and $3$, which we have proved, it is clear that $\ket{b}$ and $\ket{c}$ satisfy the definition of stabilizer states.
     
\end{proof}

\begin{lemma} \label{necessary MPA-QKD}
    If a stabilizer state is separable across a bipartition $A$ and $B$ such that $i\in A$ and $j\in B$, then secure QKD is not possible between $i$ and $j$. 
\end{lemma}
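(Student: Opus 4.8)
The plan is to reduce the claim to the necessary condition recorded in Fact~\ref{substring condition}: if secure QKD between $i$ and $j$ were possible, there would have to exist a stabilizer element $\sigma\in\mathcal{S}(\ket{a})$ whose support contains both $i$ and $j$ and such that no substring of $\sigma$ supported on $i$ (or on $j$) is itself a stabilizer element. I will show that when $\ket{a}$ is separable across $A\mid B$ with $i\in A$ and $j\in B$, no such $\sigma$ can exist, which forces the conclusion.

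The first step is to factorize the stabilizer group. By Lemma~\ref{separable stabilizer}, $\ket{a}=\ket{b}_A\otimes\ket{c}_B$ with $\ket{b}$ and $\ket{c}$ themselves stabilizer states. For any $\tau_A\in\mathcal{S}(\ket{b})$ one has $(\tau_A\otimes I_B)\ket{a}=\ket{a}$, and likewise $I_A\otimes\tau_B$ stabilizes $\ket{a}$ for every $\tau_B\in\mathcal{S}(\ket{c})$, so $\mathcal{S}(\ket{b})\otimes\mathcal{S}(\ket{c})\subseteq\mathcal{S}(\ket{a})$. Comparing orders, $|\mathcal{S}(\ket{b})\otimes\mathcal{S}(\ket{c})|=2^{|A|}\cdot 2^{|B|}=2^{n}=|\mathcal{S}(\ket{a})|$, so the inclusion is an equality. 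Hence every $\sigma\in\mathcal{S}(\ket{a})$ has the form $\sigma=\sigma_A\otimes\sigma_B$ with $\sigma_A\otimes I_B$ and $I_A\otimes\sigma_B$ both lying in $\mathcal{S}(\ket{a})$.

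Now take any stabilizer element $\sigma=\sigma_A\otimes\sigma_B$ whose support contains both $i$ and $j$. Since $i\in A$, the factor $\sigma_A$ acts nontrivially on $i$, so $\sigma_A\otimes I_B$ is a substring of $\sigma$ whose support contains $i$, and by the previous step it is a stabilizer element. This violates the hypothesis of Fact~\ref{substring condition}; as $\sigma$ was arbitrary, no stabilizer element is admissible for QKD between $i$ and $j$, so secure QKD between them is impossible. As a reinforcement, the same factorization gives $\rho_{ij}=\rho_i\otimes\rho_j$, so the local outcomes of $i$ and $j$ are statistically independent even conditioned on all public announcements; any correlation $j$ could exploit must come from the cooperators in $A$, whose outcomes are equally available to the general public, so a would-be key can never be private to $i$ and $j$.

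The main obstacle is the bookkeeping in the factorization step: making the direct-product decomposition of $\mathcal{S}(\ket{a})$ airtight by the order count, and checking that "$i$ lies in the support of $\sigma$" genuinely forces $\sigma_A$ to act nontrivially on $i$, so that a true offending substring is produced. Everything after that is immediate from Fact~\ref{substring condition}.
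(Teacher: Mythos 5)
Your proof is correct and follows essentially the same route as the paper: invoke Lemma~\ref{separable stabilizer} to factor the state, deduce that every element of $\mathcal{S}(\ket{a})$ splits as $\sigma_A\otimes\sigma_B$ with each tensor factor itself a stabilizer element, and conclude via Fact~\ref{substring condition} that no admissible stabilizer correlation between $i$ and $j$ exists. The only (harmless) difference is that you establish the factorization of the stabilizer group by the subgroup-inclusion-plus-order-counting argument, whereas the paper derives it by expanding $\ket{\psi}\bra{\psi}$ in stabilizer elements and using linear independence of the Pauli strings.
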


\begin{proof}
    Let $\ket{\psi}$ be separable across the $A,B$ bipartition such that the two communicants ($i,j$) belong to the two disjoint sets of qubits $A$ and $B$,
    \begin{eqnarray*}
        \ket{\psi} =  \ket{\phi_A} \otimes \ket{\chi_B}.
    \end{eqnarray*}
As shown in lemma \ref{separable stabilizer}, $\ket{\psi}$ being a stabilizer state results in $\ket{\phi_A} $ and $\ket{\chi_B}$ also being stabilizer states. Let $\mathcal{S},\mathcal{S}(\phi)$ and $\mathcal{S}(\chi)$ be the stabilizer groups of $\ket{\psi}$, $\ket{\phi_A}$ and $\ket{\chi_B}$. The projector onto $\ket{\psi}$ can be expanded using its stabilizer group elements as given below,
\begin{eqnarray*}
    \ket{\psi}\bra{\psi} = \frac{1}{2^{|A|+|B|}}\sum_{\sigma\in \mathcal{S}} \sigma.
\end{eqnarray*}
Using $\ket{\psi}\bra{\psi} = \ket{\phi_A}\bra{\phi_A} \otimes \ket{\chi_B}\bra{\chi_B}$, and expanding the projectors on to $\ket{\phi_A}$ and $\ket{\chi_B}$ also in terms of their stabilizer group elements gives us,
\begin{eqnarray*}
  \frac{1}{2^{|A|}}\sum_{\sigma^1\in \mathcal{S}(\phi)} \sigma^1 \otimes \frac{1}{2^{|B|}}\sum_{\sigma^2\in \mathcal{S}(\chi)} \sigma^2 = \frac{1}{2^{|A|+|B|}}\sum_{\sigma\in \mathcal{S}} \sigma. 
\end{eqnarray*}

\begin{eqnarray*}
    \implies \sum_{\sigma^1 \in \mathcal{S}^1, \sigma^2 \in \mathcal{S}^2}\sigma^1 \otimes \sigma^2 = \sum_{\sigma\in \mathcal{S}} \sigma
\end{eqnarray*}

The linear independence of each $\sigma \in \mathcal{S}$ implies that every $\sigma \in \mathcal{S}$ can be written as a product of a stabilizer element of $\ket{\phi_A}$ and stabilizer element of $\ket{\chi_B}$. That is any stabilizer element with support on $i$ and $j$ that are separable will contain two substrings, each with support on $i$ and $j$, that are stabilizers/perfect correlations. As per the necessary condition that is fact \ref{substring condition}, such stabilizers reveal the key to all. Therefore MPA-QKD cannot be performed between separable qubits.

\end{proof}

 We have shown that a lack of separability is necessary for QKD. Now we prove that lack of separability is sufficient. 

\begin{lemma} \label{path non separable}
    Any two qubits of a graph state are non separable if and only if the two qubits are connected by a path in the graph representation of the state.
\end{lemma}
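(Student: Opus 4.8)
The plan is to prove the two implications separately: one falls out of the decomposition of a graph state over its connected components, and the other is where the rigidity of the stabilizer formalism does the real work. First I would establish the contrapositive of ``$i,j$ joined by a path'' $\Rightarrow$ ``$i,j$ non-separable.'' Suppose $i$ and $j$ lie in different connected components of $G$; let $A$ be the component containing $i$ and $B=V\setminus A$ its complement in the vertex set $V$ of $G$. Since no edge of $G$ leaves a connected component, $\mathcal N(k)\subseteq A$ for every $k\in A$, so the generator $S_k=X_k\prod_{l\in\mathcal N(k)}Z_l$ is supported entirely in $A$; symmetrically every $S_k$ with $k\in B$ is supported in $B$. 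Hence $\{S_k\}_{k\in A}$ and $\{S_k\}_{k\in B}$ generate graph states $\ket{G[A]}$ and $\ket{G[B]}$ on the two blocks, and since $\ket{G}$ is the unique simultaneous $+1$-eigenvector of all $n$ generators, $\ket{G}=\ket{G[A]}\otimes\ket{G[B]}$. This product form puts $i$ and $j$ in different factors, so $i$ and $j$ are separable; equivalently, non-separability forces $i$ and $j$ into one component, i.e.\ a path between them.

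For the converse, suppose a path $i=v_0,v_1,\dots,v_m=j$ exists and, toward a contradiction, that $\ket{G}=\ket{\mu_A}\otimes\ket{\nu_B}$ for some bipartition with $i\in A$ and $j\in B$. By Lemma~\ref{separable stabilizer}, $\ket{\mu_A}$ and $\ket{\nu_B}$ are stabilizer states, with groups $\mathcal S(\mu)$ and $\mathcal S(\nu)$; and by the computation in the proof of Lemma~\ref{necessary MPA-QKD} --- expanding $\ket{G}\bra{G}=\ket{\mu_A}\bra{\mu_A}\otimes\ket{\nu_B}\bra{\nu_B}$ over the three stabilizer groups and using linear independence of distinct Pauli strings --- every element of $\mathcal S(\ket{G})$ is a tensor product of an element of $\mathcal S(\mu)$ with an element of $\mathcal S(\nu)$; comparing group orders ($2^n=2^{|A|}\,2^{|B|}$) makes this a bijection, so $\mathcal S(\mu)\otimes\{I_B\}$ and $\{I_A\}\otimes\mathcal S(\nu)$ sit inside $\mathcal S(\ket{G})$. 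Because the path leaves $A$ and enters $B$, some edge $(v_t,v_{t+1})$ crosses the cut; write it as $(a,b)$ with $a\in A$ and $b\in B$.

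Now examine the generator $S_a\in\mathcal S(\ket{G})$. Its restrictions to the two blocks obey $S_a=(S_a|_A)\otimes(S_a|_B)$, and by uniqueness of that tensor split (everything here is a Pauli string, i.e.\ a Hermitian unitary) together with the factorisation of $\mathcal S(\ket{G})$ just noted, $S_a|_B$ --- possibly up to an overall sign --- lies in $\mathcal S(\nu)$; in either case $\mathcal S(\ket{G})$ contains $I_A\otimes S_a|_B$ or its negative. But $S_a|_B=\prod_{l\in\mathcal N(a)\cap B}Z_l$ is a nonempty product of $Z$'s --- nonempty because $b\in\mathcal N(a)\cap B$ --- so this is a non-identity operator that is diagonal in the computational basis. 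That is impossible: every element of $\mathcal S(\ket{G})$ equals $\prod_{k\in W}S_k$ for a unique $W\subseteq V$, and on a qubit $m\in W$ this operator acts as $X_m Z_m^{|\mathcal N(m)\cap W|}$, i.e.\ as $X_m$ or $Y_m$ up to a phase, so any $W\neq\emptyset$ gives an off-diagonal operator and the only diagonal stabilizer element is $I$. This contradiction rules out the hypothesised split, so $i$ and $j$ are non-separable; combined with the first paragraph this is the stated equivalence.

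I expect the decisive step to be precisely the last one --- showing that a factorisation of the state cannot ``absorb'' an edge crossing the cut. The component decomposition and the observation that a connecting path must cross any separating cut are routine; the substance is that a graph-state stabilizer group is rigid enough to carry no non-trivial diagonal element, which is what forbids the hypothetical split. As a self-contained alternative to that step I would record the remark that, for a graph state, $\operatorname{Tr}\rho_A^2=|\mathcal S_A|/2^{|A|}=2^{-\operatorname{rank}_{\mathbb F_2}\Gamma_{AB}}$, where $\mathcal S_A$ is the subgroup of $\mathcal S(\ket{G})$ supported only on $A$ and $\Gamma_{AB}$ is the off-diagonal block of the adjacency matrix; then $\rho_A$ is pure --- equivalently $\ket{G}$ factorises across $(A,B)$ --- precisely when $\Gamma_{AB}=0$, i.e.\ when no edge crosses, which yields the lemma just as well.
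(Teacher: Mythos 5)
Your proof is correct, but it follows a genuinely different route from the paper's. The paper argues through the tableau (symplectic) representation: separability of a stabilizer state across a bipartition $(A,B)$ is identified with block-diagonalizability of the tableau $[\hat X\,|\,\hat Z]$, and for a graph state, where the tableau is $[I\,|\,G]$, this reduces to block-diagonality of the adjacency matrix, i.e.\ to the absence of a connecting path. You instead work directly at the level of the stabilizer group: for the easy direction you factor the state over connected components (each generator $S_k$ stays inside the component of $k$, and uniqueness of the joint $+1$ eigenstate gives the tensor split), and for the hard direction you reuse Lemma~\ref{separable stabilizer} and the projector expansion from Lemma~\ref{necessary MPA-QKD} to conclude $\mathcal S(\ket{G})=\mathcal S(\mu)\otimes\mathcal S(\nu)$, then show that an edge $(a,b)$ crossing the hypothesised cut would force $\pm I_A\otimes\prod_{l\in\mathcal N(a)\cap B}Z_l$ into the group, contradicting the fact that $\prod_{k\in W}S_k$ acts as $X$ or $Y$ on every qubit of $W$, so the only diagonal element of a graph-state stabilizer group is the identity. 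What each approach buys: the paper's tableau argument is compact and visually tied to the adjacency matrix, but it leaves implicit the bookkeeping of generator changes needed to justify ``block diagonalizable $\Leftrightarrow$ separable''; your argument is longer but makes the obstruction explicit (a cut-crossing edge manufactures a forbidden $Z$-string stabilizer), is self-contained given the paper's earlier lemmas, and your closing remark that $\Tr\rho_A^2=2^{-\operatorname{rank}_{\mathbb F_2}\Gamma_{AB}}$ recovers essentially the paper's matrix-based viewpoint as a corollary, so the two routes reinforce each other.
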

\begin{proof}
    In contrast to generic multi-qubit states, stabilizer states allow an efficient representation called the tableau or symplectic representation. For an $n$-qubit stabilizer state, there are $n$ independent stabilizer elements. Each stabilizer element can represented by a $2n$ length binary(bit) string. The first $n$ bits ($\overline{X}$)  denote which qubits Pauli $X$'s act on. The next $n$ bits ($\overline{Z}$) does the same for Pauli $Z$'s. For Pauli $Y$ on qubit $i$, we simply consider it as a product of $X$ and $Z$ and insert a $1$ at the $i$'th location for both $\overline{X}$ and $\overline{Z}$. We can construct this binary vector for each stabilizer generator ($S_i \rightarrow [\overline{X}_i \;|\; \overline{Z}_i]$) and stacking these row vectors on top of each other gives us a $2n\times n$ bit-matrix called the tableau representation of the stabilizer state.
    
    The tableau representation of a general stabilizer state takes the form,
    \begin{eqnarray*}
        T = [\hat{X}\;|\;\hat{Z}],
    \end{eqnarray*}
    where, $\hat{X}$ and $\hat{Z}$ are $n\times n$ matrices built from stacked $\overline{X}$ vectors and $\overline{Z}$ vectors respectively.
    We have already shown that the stabilizer group $(\mathcal{S})$ of a state that is separable across $A,B$ bipartition takes the form,
    \begin{eqnarray*}
        \mathcal{S} = \mathcal{S^{\prime}_A} \otimes \mathcal{S^{\prime\prime}_B}.
    \end{eqnarray*}
    For such a separable $\mathcal{S}$, if one constructs a tableau representation by stacking $2n$ length binary vectors corresponding to $\mathcal{S^{\prime}_A}$ on top of the vertically stacked binary vectors corresponding to $\mathcal{S^{\prime\prime}_B}$, it is clear that the tableau representation would be such that both $\hat{X}$ and $\hat{Z}$ are block diagonalizable with the two blocks corresponding to qubits in $A$ and $B$.
    \begin{eqnarray*}
        T = \left[\begin{array}{c c |c c}
            \hat{X}^{\prime}_A & \mathbf{0} & \hat{Z}^{\prime}_A & \mathbf{0} \\
             \mathbf{0} & \hat{X}^{\prime\prime}_B & \mathbf{0} & \hat{Z}^{\prime\prime}_B 
        \end{array}\right].
    \end{eqnarray*}
    It is also clear that block diagonalizability of the tableau representation implies that the stabilizer state is separable. For graph states, the tableau representation takes the form,
    \begin{eqnarray*}
        T = [I\;|\;G].
    \end{eqnarray*}
    $I$ is already in block diagonal form, and from elementary graph theory it can be shown that the adjacency matrix is block diagonalizable only if there exists disjoint sets of vertices with no path connecting them.
\end{proof}

Therefore, any graph state whose underlying graph is connected, cannot be separable.

\begin{lemma}\label{no substring using path}
  Secure QKD is possible between any two qubits  of a shared graph state if in the underlying graph, a path exists between the two qubits.   
\end{lemma}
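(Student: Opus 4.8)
The plan is to exhibit, for any two vertices $i,j$ joined by a path in the graph $G$, an explicit stabilizer element $\sigma$ with non-trivial support on both $i$ and $j$ that satisfies the necessary condition of Fact~\ref{substring condition}; by the discussion following that fact, this blocks the public from reconstructing the key from the announced outcomes, which is precisely secure QKD between $i$ and $j$.

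First I would pass to a shortest path $P = (v_0 = i, v_1, \dots, v_m = j)$, which exists by hypothesis and is automatically chordless: an edge $v_a v_b$ with $b > a+1$ would let one replace the subpath $v_a \cdots v_b$ by that single edge, contradicting minimality. I then set $\sigma = \prod_{k=0}^{m} S_{v_k}$, the product of the stabilizer generators along $P$. Using the tableau $[\,I\,|\,G\,]$ from Lemma~\ref{path non separable}, the group element $\prod_{k\in R} S_{v_k}$ corresponds to the binary row $[\,\mathbf{1}_R \,|\, G\mathbf{1}_R\,]$, so its $X$-type support (qubits carrying $X$ or $Y$) is exactly $R$ and its $Z$-type support (qubits carrying $Z$ or $Y$) is the symmetric difference of the neighbourhoods of the vertices in $R$. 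For $R = \{v_0,\dots,v_m\}$, chordlessness says each interior $v_t$ ($0<t<m$) has exactly its two path-neighbours inside $P$, so its two $Z$-contributions cancel and $\sigma$ carries a pure $X$ there; each endpoint has a single path-neighbour in $P$, so $\sigma$ carries a $Y$ on $i$ and on $j$; every off-path vertex receives only $Z$'s. Hence $\sigma$ has non-trivial support on both communicants and can carry a key, and it is built as a product of overlapping generators along the whole path, the choice flagged in the main text as secure.

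Then I would check Fact~\ref{substring condition}. Let $\tau = \prod_{k\in R} S_{v_k}$ be a substring of $\sigma$ lying in the stabilizer group. Since passing to a substring only replaces single-qubit Paulis by the identity, the $X$-type support of $\tau$ is contained in that of $\sigma$, so $R \subseteq \{v_0,\dots,v_m\}$. Now impose, at each vertex where $\sigma$ is a pure $X$ --- namely each interior $v_t$ --- that $\tau$ be $X$ or $I$ there: this forces $(G\mathbf{1}_R)_{v_t} = 0$, i.e.\ $|\{v_{t-1},v_{t+1}\}\cap R|$ is even, since chordlessness and $R\subseteq P$ make these the only neighbours of $v_t$ that can lie in $R$. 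Hence membership in $R$ is constant along $v_0, v_2, v_4, \dots$ and along $v_1, v_3, v_5, \dots$; at the endpoints $\sigma$ is a $Y$, forcing $(\mathbf{1}_R)_{v_0} = (G\mathbf{1}_R)_{v_0}$, i.e.\ $v_0 \in R \iff v_1 \in R$, which ties the two chains together. Therefore $R$ is either empty or all of $\{v_0,\dots,v_m\}$, so $\tau$ equals, up to phase, the identity or $\sigma$ itself --- never a proper substring with support on $i$ or $j$. Thus $\sigma$ meets Fact~\ref{substring condition}, the public announcements leave the key undetermined, and, because $\sigma$ uses the whole path rather than a subset of it, no other party sharing the state can recover it either, as elaborated in Section~\ref{multiple keys section}.

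I expect the main obstacle to be the bookkeeping rather than any deep idea: pinning down ``substring'' and ``support'' against the symplectic picture, and tracking that chordlessness is exactly what is needed so that the interior path vertices carry a pure $X$ --- any chord would deposit stray $Z$'s on the path and break the clean parity argument. A minor point to handle with care is the off-path support: because $R \subseteq \{v_0,\dots,v_m\}$, any $Z$ that $\tau$ might place off the path is irrelevant to the constraints that determine $R$, so those vertices impose nothing new.
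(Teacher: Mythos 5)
Your proof is correct, and while it follows the same broad strategy as the paper (build an explicit stabilizer element from overlapping generators along a shortest, hence chordless, path and verify Fact~\ref{substring condition} by decomposing any stabilizer substring as $\prod_{k\in R}S_k$ with $X$-type support $R$), the details differ in two genuine ways. First, your $\sigma$ is the product of \emph{all} generators along the path, giving $Y$'s on the two communicants and pure $X$'s on the interior path vertices, whereas the appendix proof multiplies only every other generator, so its $\sigma$ has no support at all on the even-indexed path qubits; the paper's choice needs fewer cooperating announcements, while yours coincides with the construction $\sigma=\prod_{k\in P(i,j)}S_k$ that the main text itself recommends in Section~\ref{multiple keys section} to also shut out other sharers of the state, so your lemma proof doubles as a proof of that stronger choice. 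Second, your verification is a parity argument in the tableau picture: the on-path constraints ($Z$-component zero at interior vertices, $X$- and $Z$-components equal at an endpoint) force the subset $R$ to be empty or the whole path, so the only stabilizer substrings are $I$ and $\sigma$ itself. The paper instead argues by contradiction, locating the element of $R$ closest to the far endpoint and exhibiting a stray $Z$ on an even-indexed qubit outside the support of its alternating $\sigma$; as written that argument tacitly assumes the offending substring omits the far communicant, a case your all-or-nothing parity argument handles uniformly, which is a small gain in completeness and, arguably, in transparency.
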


\begin{proof}

Consider a graph state $\ket{G}$ with the stabilizer group $\mathcal{S}$ and stabilizer generators, $\{S_i\}$. With lemma \ref{path non separable},  we have already shown that if a path exists between qubits $i$ and $j$, it is non separable and vice versa. 
 
Without loss of generality, we re-label the two communicants to $1$ and $r$ and label the qubits on a shortest path between $1$ and $r$ as 

\begin{eqnarray*}
 p=\{1, 2, 3,\dots,r-2,r-1,r\}.   
\end{eqnarray*}

We use this path, $p$ to construct a stabilizer element $\sigma \in \mathcal{S}$ that satisfies fact \ref{substring condition}.

Now, consider a stabilizer element constructed by taking a product of all odd number labelled $S_i$'s along this shortest path, $p$,
\begin{equation}
    \sigma = S_1 \cdot S_{3} \cdot S_{5} \cdots S_{r-4}\cdot S_{r-2} \cdot S_{r}.
\end{equation}
We have assumed, without loss of generality, that $r$ is an odd number because, even if $r$ was an even number, the string of operators above would end in $S_{r-1}$, which also has support on $r$. Also, since $p$ is a shortest path and since each $S_i$ only has support on $i$ and its neighbours, $\sigma$ consists of Pauli $X$'s on the qubits $\{1,3,5,\dots\}$. All other qubits in the support of $\sigma$ are acted upon only by Pauli $Z$'s. A point to note is that due to overlapping Pauli $Z$'s
squaring to identity, $\sigma$ has no support on the qubits in $\{2,4,6,\dots\}$.

To satisfy  fact \ref{substring condition}, no substring of $\sigma$ with support on $i$ or $j$ should be a stabilizer element. We prove that $\sigma$ has no such substrings by assuming that a non-trivial stabilizer substring exists and arrive at a contradiction.

Let $\eta$ be a non-trivial substring of $\sigma$ with support on $i$ and $\eta \in \mathcal{S}$. The support of $\sigma$ is $\{1,3,5,\dots\}\cup A$, where $A = \mathcal{N}(1) \oplus \mathcal{N}(3) \oplus \mathcal{N}(5)\oplus \cdots$. Since $\eta$ is a substring of $\sigma$, $\text{support}(\eta) \subset \text{support}(\sigma)$. Since Pauli strings containing only $Z$ cannot be an element of $\mathcal{S}$, $\eta$ has non trivial support on some subset of the qubits in $\{1,3,5,\dots\}$. Let us denote this subset of $\{1,3,5,\dots\}$ with $B$. For any $\eta \in \mathcal{S}$, 

\begin{eqnarray*}
    \eta = \prod_{i} S_i^{r_i} \;\; ; \;\; r_i\in \{0,1\}.   
\end{eqnarray*}

But, since Pauli $X$'s of $\eta$ are all in $B$, $\eta = \prod_{i\in B} S_i$. Recall that we have indexed all qubits from $i$ to $j$ as $\{1,2,3,\dots,r\}$. Let us denote the qubit closest to $r$ in $B$ as $l$. $p$ being the shortest path implies that, 

\begin{eqnarray*}
 l + 1 \notin \bigcup_{i \in B\backslash l}\mathcal{N}(i).   
\end{eqnarray*}

Therefore $\eta$ has support on $l+1$ with a Pauli $Z$. Note that $l+1$ is an even number indexed qubit. But, $\sigma$ has no support on qubits in $\{2,4,6,\dots\}$, \textit{i.e.}, even numbered qubits. Therefore, $\eta$ cannot be a substring of $\sigma$ because it has support on a qubit on which $\sigma$ has no support.  

\end{proof}

Now we have everything needed to prove theorem \ref{main_result},
\begin{proof}
 For any stabilizer state there exists a local Clifford equivalent graph state \cite{10.5555/2011477.2011481,681315,1023317}. Therefore given a non separable stabilizer state, via local Clifford unitary transformations one can be arrive at a non separable graph state. Let this local Clifford transformation be $U$. This graph state has a shortest path connecting the two communicant qubits as per lemma \ref{path non separable}. As per lemma \ref{no substring using path}, this shortest path can be used to construct a stabilizer element ($\sigma$) that allows secure QKD. Clearly, $U \sigma U^{\dagger}$ is a secure stabilizer choice that satisfies fact \ref{substring condition}, that can be used for QKD between the two communicants.     
\end{proof}

\end{document}